\DeclarePairedDelimiter\floor{\lfloor}{\rfloor}
\newtheorem{theorem}{Theorem}
\newtheorem{proposition}{Proposition}
\newtheorem{lemma}{Lemma}
\newtheorem{corollary}{Corollary}
\newtheorem{remark}{Remark}
\newtheorem{example}{Example}
\newcommand{\beq}{\begin{equation}}
\newcommand{\eeq}{\end{equation}}
\newcommand{\goto}{\rightarrow}
\newcommand\X{\bm{X}}
\newcounter{rcnt}[section]
\newcommand{\beas}{\begin{eqnarray*}}
\newcommand{\eeas}{\end{eqnarray*}}
\newcommand{\bea}{\begin{eqnarray}}
\newcommand{\eea}{\end{eqnarray}}
\newcommand{\bei}{\begin{itemize}}
\newcommand{\eei}{\end{itemize}}
\newcommand{\ben}{\begin{enumerate}}
\newcommand{\een}{\end{enumerate}}
\newcommand{\bet}{\begin{theorem}}
\newcommand{\eet}{\end{theorem}}
\newcommand{\bel}{\begin{lemma}}
\newcommand{\eel}{\end{lemma}}
\newcommand{\bep}{\begin{proposition}}
\newcommand{\eep}{\end{proposition}}
\newcommand{\bed}{\begin{definition}}
\newcommand{\eed}{\end{definition}}
\newcommand{\bec}{\begin{corollary}}
\newcommand{\eec}{\end{corollary}}
\newcommand{\bex}{\begin{example}}
\newcommand{\eex}{\end{example}}
\newcommand{\ep}{\epsilon}
\def\0{\boldsymbol{0}}
\def\X{\boldsymbol{X}}
\newbox\TempBox \newbox\TempBoxA
\def\pr{\mathbb{P}} 
\def\ep{\mathbb{E}} 
\def\Var{\textsf{Var}} 
\def\0{\boldsymbol{0}}
\def\X{\boldsymbol{X}}
\newcommand{\khx}{K_{h_x}}
\newcommand{\kht}{K_{h_t}}
\begin{document}

\title{Structure--Adaptive Sequential Testing for Online False Discovery Rate Control}

\author{Bowen Gang$^1$, Wenguang Sun$^2$, and Weinan Wang$^3$}
\date{}

\maketitle

\begin{abstract}
 
		Consider the online testing of a stream of hypotheses where a real--time decision must be made before the next data point arrives. The error rate is required to be controlled at {all} decision points. Conventional \emph{simultaneous testing rules} are no longer applicable due to the more stringent error constraints and absence of future data. Moreover, the online decision--making process may come to a halt when the total error budget, or alpha--wealth, is exhausted. This work develops a new class of structure--adaptive sequential testing (SAST) rules for online false discover rate (FDR) control. A key element in our proposal is a new alpha--investment algorithm that precisely characterizes the gains and losses in sequential decision making. SAST captures time varying structures of the data stream, learns the optimal threshold adaptively in an ongoing manner and optimizes the alpha-wealth allocation across different time periods. We present theory and numerical results to show that the proposed method is valid for online FDR control and achieves substantial power gain over existing online testing rules.     

\end{abstract}

\noindent \textbf{Keywords:\/} Alpha--investing; Conditional local false discovery rate; Covariate--assisted inference;  Structured multiple testing; Time series anomaly detection

\footnotetext[1]{Department of Mathematics, University of Southern California.}  

\footnotetext[2]{Department of Data Sciences and Operations, University of Southern California. Corresponding Email: wenguans@marshall.usc.edu. The research of Wenguang Sun was supported in part by NSF grant DMS-1712983. } 

\footnotetext[3]{Snap Inc. }

\newpage

\section{Introduction}

The online testing problem is concerned with the investigation of a possibly infinite stream of null hypotheses $\{H_1, H_2, \cdots\}$ in an ongoing manner based on sequentially collected data $\{X_1, X_2, \cdots \}$. At each time point, the investigator must make a real-time decision after $X_t$ arrives, without knowing future data $\{X_{t+1}, X_{t+2}, \cdots\}$. The control of multiplicity in sequential testing typically involves imposing serial constraints on error rates over time, which requires that, for example, the family wise error rate (FWER) or false discovery rate (FDR; \citealp{BenHoc95}) must fall below a pre--specified level $\alpha$ at \emph{all} decision points.

The online testing problem may arise from a range of applications. For example, the quality preserving database (QPD) framework \citep{Aha10} has been widely employed by many research teams from diverse backgrounds. Some notable databases include Stanford's HIVdb that serves the community of anti-HIV treatment groups, WTCCC's large-scale database that is distributed to assist various whole-genome association studies, and the National Health Institute (NIH) influenza virus resource (IVR) that has been intensively queried by numerous researchers for designing new vaccines and treatments. The proper and efficient management of these large databases calls for new analytical tools for handling thousands of hypothesis tests with real--time decisions made in a sequential fashion. For instance, the NIH IVR has been used to investigate thousands of biomedical hypotheses and, per the record in PubMed, has lead to more than 1,000 scientific publications as of January 2020. 
It has become increasingly important to develop a powerful and effective monitoring system to  control the false positive findings over time.  
Another important application scenario, which is frequently encountered in finance, social media and mobile computing, is the real--time detection of anomalies based on high--frequency and large--scale time series data. For example, large travel service providers closely monitor the number of changes or cancellation requests of existing itineraries. An abnormal spike usually signifies an unexpected event. It is important for the company to detect such events early and make necessary adjustments. The development of online detection system plays a key role for providing novel and timely marketing insights and avoiding adverse financial losses.

Large-scale testing under the online setup poses several new issues that are not present in conventional ``offline'' setup. First, a real--time decision must be made before the next data point arrives. This makes conventional step--wise testing methods no longer applicable. For instance, the well--known Holm's procedure \citep{holm1979simple} for FWER control and Benjamini--Hochberg's procedure for FDR control both involve first ordering \emph{all} observed $p$-values and then choosing a threshold along the ranking. However, the ranking step becomes impossible due to the absence of future data. Second, in contrast with conventional FWER and FDR criteria that only require an overall assessment of the multiplicity in simultaneous testing, the online methods must proceed with more stringent error constraints that are  imposed sequentially at every decision point. This not only  leads to decreased power in detecting signals but also calls for more carefully designed online testing rules. Third, the data stream often encodes useful local structures, including signal magnitudes, sparsity levels and grouping patterns, that may vary over time. It is crucial to develop flexible and adaptive online rules to exploit the underlying domain knowledge and informative structures. Fourth, the online decision-making process, which proceeds sequentially without the knowledge of future, may come to a halt when the total error budget, or alpha--wealth, is exhausted. As a result, the investigator may miss all potential discoveries in the future. This concern must be carefully addressed because in many applications the hypothesis tests are conducted in an ongoing manner with unpredictable patterns -- even the total number of hypotheses to be investigated can be unknown. Finally, how to wisely allocate and invest the alpha--wealth to ensure the validity in error control while maintaining high statistical power of online testing rules in the long run has remained as a key issue that requires much research.

The online FDR control problem has received much recent attention and great progresses have been made. The alpha-investing (AI) idea \citep{Fos08} and its various generalizations \citep{AhaRos14, Rametal17,Jav16} have served as the basic framework and proved to be effective. Carefully designed AI rules are capable of handling an infinite stream of hypotheses and incorporating informative domain knowledge into the dynamic decision-making process. Beginning with a pre--specified alpha--wealth, the key idea in AI algorithms is that each rejection gains extra alpha--wealth, which may be subsequently used to make more discoveries at later time points. The generalized AI (GAI) algorithms \citep{AhaRos14,Rob18,Lyn17} are developed for a wider class of pay-out functions, enabling the construction of new online rules with increased power. The GAI++ framework \citep{Rametal17} improves the power of GAI methods uniformly and is capable of dealing with more general settings. The new class of weighted GAI++ methods are flexibly designed to allow ``indecisions'' and are capable of integrating prior domain knowledge. To alleviate the ``piggybacking'' and ``alpha--death'' issues of AI rules, \cite{Rametal17} discussed the concept of decaying memory FDR.  To effectively incorporate structural information into online inference, the SAFFRON procedure \citep{Rametal18} derived a sequence of thresholds that are adaptive to estimated sparsity levels and showed that the power can be much improved.

This article develops a new class of structure--adaptive sequential testing (SAST) rules for online FDR control with several new features. First, in contrast with existing AI and GAI rules whose building blocks are $p$-values, the class of SAST rules are built upon the conditional local false discovery rate (Clfdr), which optimally adapts to important local structures in the data stream. Second, the sequential rejection rule based on Clfdr leads to a novel alpha--investing framework that is fundamentally different from that in \cite{Fos08}. The new framework precisely characterizes the tradeoffs between different actions in online decision making, which provides key insights for designing more powerful online FDR rules. The new AI framework also reveals that SAST automatically avoids the ``alpha--death'' issue in the sense that its operation always reserves budget to reject new hypotheses, and can proceed in an ongoing manner to any time point in the future. Finally, by adaptively learning from past experiences and dynamically allocating the alpha–wealth, SAST can effectively avoid the ``piggybacking'' issue and improve its performance as more data are acquired. Our theoretical and numerical results demonstrate that SAST is effective for online FDR control, and achieves substantial power gain over existing methods in many settings. 

The article is organized as follows. Section 2 first introduces the model and problem formulation, and then develops the oracle SAST procedure for online FDR control by assuming that model parameters are known. Section 3 discusses computational algorithms, proposes the data-driven SAST rule and establishes its theoretical properties. Simulation is conducted in Section 4 to investigate the finite sample performance of SAST and compare it with existing methods.  SAST is illustrated in Section 5 through applications for identifying differentially expressed genes and detecting anomalies in time series data. The proofs are provided in the online supplementary material.

\setcounter{equation}{0}

\section{Oracle and Adaptive Rules for Online FDR Control}

We first describe the model and problem formulation in Section \ref{model.subsec}, then discuss three key elements in the proposed SAST rule in turn: a new test statistic to capture the structural information in the data stream (Sections 2.2 and 2.3); a new alpha--investing framework to characterize the gains and losses in sequential decision making (Section 2.4); and a new adaptive learning algorithm to optimize the alpha--wealth allocation (Sections 2.5).

\subsection{Model and Problem Formulation}\label{model.subsec}

Denote $\mathcal T$ a continuous temporal domain and $t\in\mathcal T$ a time point. Let $\mathbb T\subset \mathcal T$ be a discrete, ordered and evenly spaced index set for time labels\footnote{$\mathbb T$ may be taken either as  $\{1, 2, \cdots, t\}$ on a growing domain or a set of points that lie on a fixed-domain regular grid: $\{\frac{1}{t},\frac{2}{t},...,\frac{t-1}{t}, 1\}$ with $t\rightarrow \infty$. }.  Suppose we are interested in testing a sequence of null hypotheses $\{H_t: t\in \mathbb T\}$ based on data stream $\pmb X=(X_t: t\in \mathbb T)$. To describe the true states of nature, define Bernoulli variables $\theta_t$, where $\theta_t=0/1$ if $H_t$ is true/false. Let $\{\pi_t\equiv P(\theta_t=1): t\in\mathcal T\}$ denote the local sparsity levels that may vary over time. The observations can be described using a hierarchical model:
\begin{equation}\label{hmix-model}
\theta_t  \sim  \text{Bernoulli}(\pi_t), \quad X_t|\theta_t  \sim F_t=(1-\theta_t)F_0+\theta_t F_{1t},
\end{equation}
where $F_0$ and $F_{1t}$ are the null and non-null distributions, respectively. Denote $f_0$ and $f_{1t}$ the corresponding density functions. We assume that $F_0$ is known and identical for all $t\in \mathcal{T}$. By contrast, $\pi_t$ and $f_{1t}$ can vary smoothly in $t\in\mathcal T$. 
\begin{remark}\rm{
The inhomogeneity assumption reflects that signals may either vary in strengths or arrive at different rates over time. This structural information can be highly informative. The smoothness assumption makes it possible for pooling information from the observations in the neighborhood of $t$. We do not impose further assumptions on $\pi_t$ and $F_{1t}$, both of which will be estimated non-parametrically.}  
\end{remark}

Let $\pmb X^t=(X_i: i\in \mathbb T; \; i\leq t)$ be the collection of summary statistics (e.g. $p$--values or $z$--values) up to time $t$. Consider a class of online decision rules $\pmb{\delta}=\{\delta_t(\pmb X^t): t\in \mathbb T\}\in \{0, 1\}^{\mathbb T}$, where $\delta_t(\pmb X^t)$ represents a \emph{real-time decision} in the sense that $\delta_t$ only depends on information available at time $t$, with $\delta_t=1$ indicating that $H_t$ is rejected and $\delta_t=0$ otherwise. Denote $\pmb\delta^t=\{\delta_i(\pmb X^i): i\in \mathbb T; \; i\leq t\}$ the collection of decisions up to $t$. The online FDR problem is concerned with the performance of a stream of real--time decisions. For decisions up to $t$, let
\begin{equation}\label{FDR-t}
\text{FDR}^{t}(\pmb \delta^t)=\mathbb{E}\left\{\frac{\sum_{i\leq t; i\in \mathbb T}(1-\theta_i)\delta_i}{(\sum_{i\leq t; i\in \mathbb T}\delta_i)\vee 1}\right\},
\end{equation}
where the superscript ``t'' denotes that the FDR is evaluated at a specific time point. The goal is to construct a real--time decision rule $\pmb{\delta}=\{\delta_t(\pmb X^t): t\in \mathbb T\}$ that controls the $\text{FDR}^{t}$ at level $\alpha$ for all $t \in \mathbb T$. 
To compare the power of different testing rules, define the average power (AP) and missed discovery rate (MDR) as
\begin{equation}\label{AP:MDR}
\mbox{AP}^t(\pmb\delta^t)=\frac{\mathbb{E}(\sum_{i\leq t; i\in \mathbb T} \theta_i \delta_i)}{\mathbb{E}(\sum_{i\leq t; i\in \mathbb T}\theta_i)}; \quad  \mbox{MDR}^t(\pmb\delta^t)=1-\mbox{AP}^t(\pmb\delta^t).
\end{equation}

To simplify the discussion, throughout this section we assume that the distributional information such as the non-null proportion $\pi_{t}$ and density function $f_{t}$ in Model \ref{hmix-model} are known. Section 3 considers the case where model parameters are unknown and discusses in detail related estimation and implementation issues. 

\subsection{The oracle rule for simultaneous testing}\label{oracle-rule.subsec}

The goal of this section is to justify the fundamental role of Clfdr as the building block of the proposed online FDR rule.  

The online decision-making process is complicated due to the serial constraints on FDR and absence of future data. To focus on the essential issue, we first consider an ideal setup where a hypothetical oracle observes \emph{all data in a local neighborhood at once} and makes \emph{a batch of simultaneous decisions}. Let $d$ denote the size of a neighborhood. Consider the collection of hypotheses in a neighborhood prior to $t^*\geq d$: $\{H_i: t^*-d+1\leq i\leq t^*\}$. Denote the neighborhood $\mathcal N_d(t^*)= \{t^*-d+1, \cdots, t^*\}$ and the simultaneous decisions $\pmb\delta^*=\left\{\delta_i^*: i\in \mathcal N_d(t^*)\right\}$, where $\delta_i^*$ is allowed to depend on the entire $d$-vector $\pmb X^*=\{X_i: i\in \mathcal N_d(t^*)\}$. Unlike \eqref{FDR-t}, we only require that the FDR is controlled for the $d$ simultaneous decisions:
\beq\label{FDR-s}
\mbox{FDR}^{\rm s}(\pmb\delta^*)=\mathbb{E}\left\{\frac{\sum_{i\in \mathcal N_d(t^*)}(1-\theta_i)\delta_i}{(\sum_{i\in \mathcal N_d(t^*)}\delta_i)\vee 1}\right\},
\eeq
where the superscript ``s'' indicates a simultaneous--type FDR concept.

The simultaneous testing of multiple hypotheses can be conceptualized as a two-stage inferential process: firstly ranking all hypotheses according to a significance index and secondly choosing a cutoff along the ordered sequence. This process can be described by a thresholding rule of the form 
$$\pmb\delta=\{\mathbb I(\Lambda_i\leq c): i\in \mathcal N_d(t^*)\}, 
$$
where $\mathbb I(\cdot)$ is an indicator function, $\Lambda_i$ is the significance index of $H_i$ and $c$ is the cutoff of $\Lambda_i$. For example, the BH procedure uses the $p$-value as the significance index to order the hypotheses, and implements a step-up algorithm to determine a data-driven cutoff $c$. 

However, the $p$-value is inefficient for online FDR analysis as it fails to capture the important structural information in the data stream. We propose to use the conditional local false discovery rate (Clfdr)  as the significance index to order the hypotheses:
\beq\label{Clfdr}
\text{Clfdr}_t(x_t)=\mathbb{P}(\theta_t=0|X_t=x_t)=\frac{(1-\pi_{t})f_0(x_t)}{f_{t}(x_t)}, \;\mbox{for $t\in\mathbb T$}.
\eeq

Denote $\text{Clfdr}_{(1)},\cdots,\text{Clfdr}_{(d)}$ the ordered Clfdr values in $\mathcal N_d(t^*)$ and $H_{(1)}, \cdots, H_{(d)}$ the corresponding hypotheses. To determine the cutoff for simultaneous testing, we apply a step-wise algorithm 
\beq\label{AZ}
k=\max\left\{j: \frac 1 j\sum_{i=1}^j\text{Clfdr}_{(i)}\leq \alpha\right\}.
\eeq
Then the threshold is $c=\text{Clfdr}_{(k)}$ and we reject $H_{(1)}, \cdots, H_{(k)}$. The Clfdr rule \eqref{AZ} may be viewed as an oracle rule that sees all data in a local neighborhood at once and then makes simultaneous decisions. In Appendix \ref{opt-clfdr.sec}, we establish the optimality property of the Clfdr rule for simultaneous testing under the ``offline'' setup. An infinite data stream can be approximately by sequential data points arrived in batches. Intuitively, the Clfdr statistic provides a good building block for developing new online sequential testing rules as it is optimal for simultaneous inference in each batch of data points. 
 
\begin{remark}\rm{In the ``offline'' setup for simultaneous testing with a covariate sequence, which includes the Clfdr rule \eqref{AZ} as a special case, \cite{Caietal19} develops asymptotic optimality theory. We can similarly show that  \eqref{AZ} is asymptotically optimal in the sense that it achieves the benchmark of a hypothetical oracle. However, the optimality issue in the online setup, which depends on many other factors such as the optimal allocation of alpha--wealth and prediction of future patterns over time, is still an open issue and requires much research.
}
\end{remark}

\subsection{Adapting to local structures by Clfdr: an illustration}\label{adapt:sec}

The incorporation of structural information and domain knowledge promises to improve the power of existing FDR procedures (\citealp{Genetal06, CaiSun09, Huetal10, LeiFit18, Caietal19}). For example, the works by \cite{Huetal10}, \cite{LiBar19} and \cite{Xiaetal19} showed that the weighted $p$-values can be constructed to capture the varying sparsity levels of ordered or grouped hypotheses. In contrast with the $p$-value, the Clfdr takes into account important structural information such as $\pi_t$ and $f_t$, which makes Clfdr an ideal building block for multiple testing with inhomogeneous data streams. We present an example to illustrate the advantage of the Clfdr rule. 

 Consider the following situation where the data stream $\{X_1, X_2, \ldots, X_t, \ldots\}$ obeys a random mixture model with varying sparsity levels:
\begin{equation}\label{hmix-model2}
X_t \sim (1-\pi_{t})N(0,1)+\pi_{t} N(\mu, 1).
\end{equation}
Model \eqref{hmix-model2} is a special case of Model \eqref{hmix-model}: the null and alternative densities are fixed and the dynamic part is fully captured by the varying proportion $\pi_t$. The key idea of Clfdr and weighted p-value (in the form of $p_t/w_t$, where $w_t$ is the weight for $H_t$) is to up–weight the hypotheses in a local neighborhood where signals appear more frequently (e.g. in clusters). 

To compare the effectiveness of different weighting methods, we simulate a data stream for testing $m=5000$ hypotheses. The top row in Figure \ref{clfdr:fig} sets $\pi_t=0.5$ in blocks $[1001:1150]$, $[2001:2150]$, $[3001:3100]$ and $[4001:4150]$, and $\pi_t=0.01$ elsewhere. We vary $\mu$ from 2 to 4. The bottom row sets $\mu=2.5$ and vary $\pi_t$ from 0.2 to 0.9 in the above blocks. The block structure is highly informative and can be exploited by Clfdr and weighted p-values to improve the power. We apply the following methods at FDR level $\alpha=0.05$ by assuming that the model parameters in \eqref{hmix-model2} are known: BH \citep{BenHoc95}, the structure--adaptive BH algorithm (SABHA; \citealp{LiBar19}) using weighted $p$-values with $w_t=1/(1-\pi_t)$, the GAP method \citep{Xiaetal19} using weighted $p$-values with $w_t=\pi_t/(1-\pi_t)$, and the Clfdr rule \eqref{AZ}. We can see that all methods control the FDR at the nominal level. In terms of the power, BH can be improved by SABHA and GAP, both of which are dominated by the Clfdr rule. Clfdr captures the varying structure in the data stream more effectively: in addition to varied $\pi_t$, it also adapts to $f_t$, leading to further power improvement. 

\begin{figure}
\includegraphics[width=0.95\textwidth]{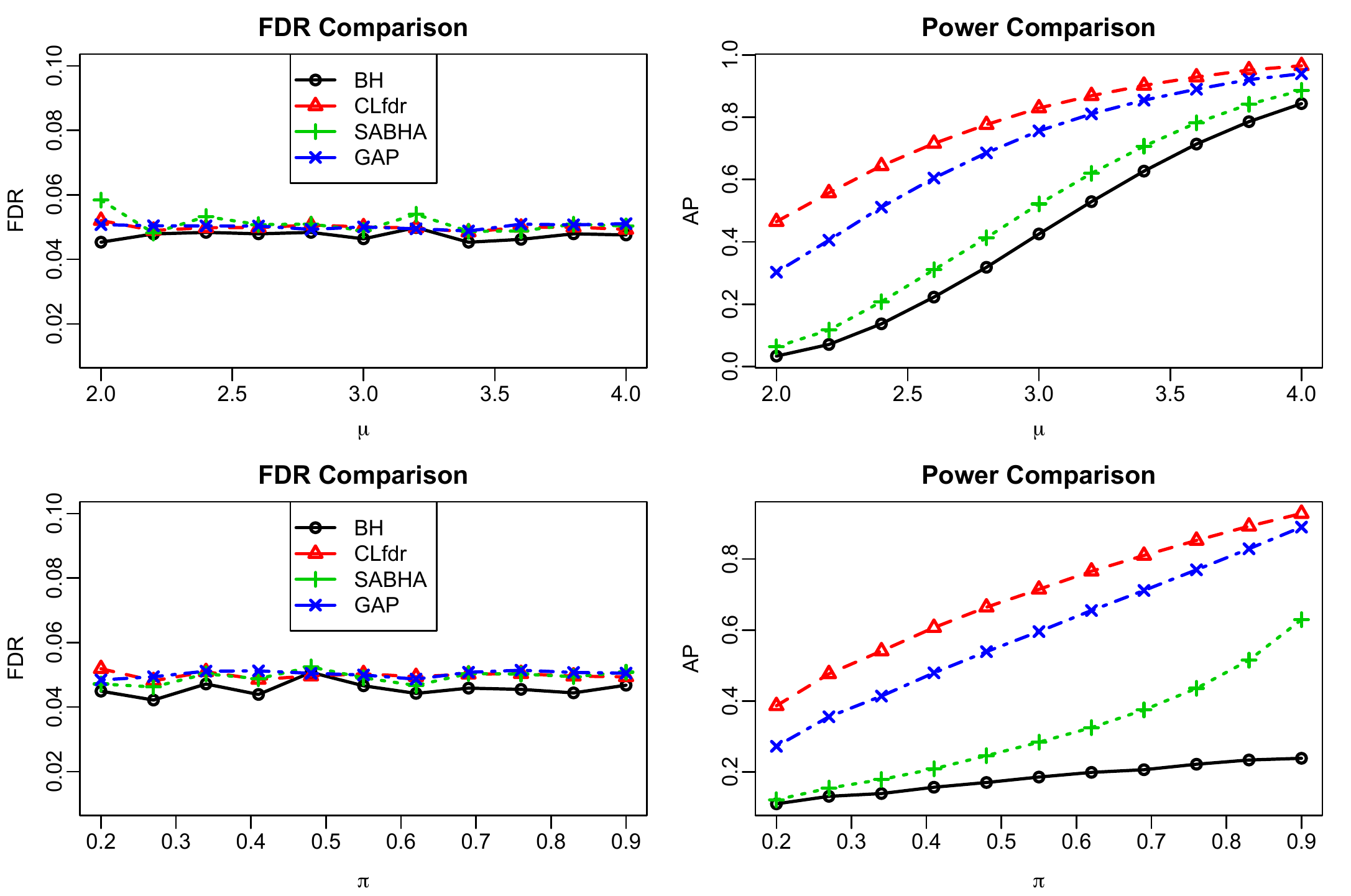}\caption{\footnotesize Structure--adaptiveness: Clfdr vs weighted p-values.}\label{clfdr:fig}
\end{figure}

\subsection{A new alpha--investing framework}\label{alpha-invest.sec}

Existing FDR methods such as the BH and Clfdr procedures are simultaneous inference procedures that involve first ordering the significance indices ($p$-value or Clfdr) of all hypotheses and then applying a step-wise algorithm to the ordered sequence to determine the threshold. However, the ranking and thresholding strategy cannot be applied to the online setting where the investigator must make real--time decisions without seeing future observations. This section discusses how to avoid the overflow of the FDR at any given time $t$ and how to efficiently allocate the alpha--wealth to increase the power.

We start with a novel interpretation of the alpha--investing idea by recasting the Clfdr algorithm \eqref{AZ} as a varying--capacity knapsack process. Denote $\mathcal R_{t}\subset \{H_1, H_2, \cdots, H_t\}$ the collection of rejected hypotheses at time $t$. The decision process \eqref{AZ} can be conceptualized as a sequence of comparisons of two quantities: the nominal FDR level $\alpha$ and the average of the rejected Clfdr values. Specifically, \eqref{AZ} motivates us to consider the constraint
\begin{equation}\label{FDR-serial}
\texttt{Ave}\left\{\mbox{Clfdr}_i: i\in \mathcal R_t \right\}\leq \alpha, \; \mbox{ for all $t\in \mathbb T$,}
\end{equation} 
where $\texttt{Ave}(A)$ denotes the average of the elements in set $A$. The simultaneous testing setup is only concerned with one constraint at the last time point when all data have been observed. By contrast, the online setup poses a series of constraints, e.g. \eqref{FDR-serial} must be fulfilled for every $t$ to avoid the overflow of $\mbox{FDR}^t$ \eqref{FDR-t}.

We view \eqref{FDR-serial} as a dynamic decision process resembling a knapsack problem, where $H_t$ can only be rejected when the following constraint is satisfied: 
\begin{equation}\label{FDR-cons} 
 \mbox{Clfdr}_t - \alpha \leq C_t \coloneqq - \sum_{H_i\in \mathcal R_{t-1}} \left(\mbox{Clfdr}_i - \alpha\right), \mbox{ for $t=1, 2, \cdots$}
\end{equation}
where $C_t$ is the \emph{capacity} (of the knapsack) at time $t$ with the default choice $C_1=0$. The capacity may either expand or shrink over time, depending on the sequential decisions along the data stream. This dynamic process can be described as follows. The initial capacity is $C_1=0$. Starting from $t=1$, we reject $H_t$ if \eqref{FDR-cons} is fulfilled. 
If $H_t$ with $\mbox{Clfdr}_t<\alpha$ is rejected, then the capacity $C_t$ increases by $\alpha-\mbox{Clfdr}_t$ (gain);  hence we earn bonus room. By contrast, if $H_t$ with $\mbox{Clfdr}_t>\alpha$ is rejected, then $C_t$ decreases by $\alpha-\mbox{Clfdr}_t$ (loss). 

The decision process \eqref{FDR-cons} provides a new alpha--investing framework that precisely characterizes the gains and losses in sequential testing. In contrast with the alpha--investing framework in \cite{Fos08}, which views each rejection as a gain of extra alpha--wealth, the new characterization  \eqref{FDR-cons} reveals that not all rejections are created equal: rejections with small Clfdr will lead to increased alpha--wealth whereas rejections with large Clfdr will lead to decreased alpha--wealth. This view provides key insights for designing more powerful online FDR rules. Moreover, the new AI framework reveals that utilizing Clfdr rules can automatically avoid the ``alpha--death'' issue. Specifically, the process \eqref{FDR-cons} can always reject new hypotheses with $\mbox{Clfdr}<\alpha$ regardless of the current budget, and can proceed in an ongoing manner to any time point in the future.

\subsection{Oracle--assisted adaptive learning and the SAST algorithm}

To efficiently allocate the alpha--wealth, we need to further refine the online algorithm \eqref{FDR-cons} to avoid making imprudent rejections that can potentially eat up all the budget. The specific issue is referred to as ``piggybacking'' \citep{Rametal17}, which, in a vivid way, describes the phenomenon that a string of bad decisions were made due to previously acquired budget. 

To see the necessity of taking careful actions, suppose that we have accumulated some bonus room over time before observing a very large $\mbox{Clfdr}_t$ satisfying \eqref{FDR-cons}. Although rejecting $H_t$ is an action that obeys the FDR constraint, the action can be unwise since it is possible that we can invest the extra ``cost'', $\mbox{Clfdr}_t-\alpha$, to make more discoveries at later time points. A practical strategy is to incorporate a ``barrier'' $\gamma_t$ and modify \eqref{FDR-cons} as
\begin{equation}\label{FDR-cons2} 
\mbox{Clfdr}_t< \gamma_t \; \mbox{and}\; \mbox{Clfdr}_t - \alpha \leq - \sum_{H_i\in \mathcal R_{t-1}} \left(\mbox{Clfdr}_i - \alpha\right).
\end{equation}
The barrier can effectively prevent ``piggybacking'' by filtering out large $\mbox{Clfdr}_t$ and hence saving budget for future. 

The choice of $\gamma_t$ depends on the pattern of future hypotheses. However, all online methods must proceed without seeing the future. To resolve the issue, consider the oracle Clfdr rule \eqref{AZ} that sees all data in a local neighborhood at once. If we assume that the hypothesis stream is ``locally stable'' in its patterns, then $\gamma_t$ may be informed by the oracle rule \eqref{AZ} \emph{simultaneously} conducted on a local neighborhood $\mathcal N_d(t)=\{t-d-1, \cdots, t\}$. The rationale is to use recent past data to get some ideas about the patterns of hypotheses to arrive in the near future. Concretely, we first order  $\{H_i: i\in\mathcal N_d(t)\}$ according to their Clfdr values, then run the ``offline'' algorithm \eqref{AZ} to set the barrier $\gamma_t=\mbox{Clfdr}_{(k+1)}$. The online algorithm, by acting as if it sees the future, can effectively filter out large Clfdr values and hence avoid inefficient investments. The operation of algorithm \eqref{AZ} also implies that the barrier $\gamma_t$ may be either raised or lowered according to the varied $\pi_t$ and $f_t$ in the dynamic model, which is desirable in practice for dealing with inhomogeneous data streams. In Section \ref{bar:sec}, we illustrate that the incorporating of the barrier can greatly reduce the MDR \eqref{AP:MDR}. 

Finally, we present the proposed structure--adaptive sequential testing (SAST) rule (oracle version with known parameters) in Algorithm 1. The SAST algorithm essentially utilizes the sequential constraints \eqref{FDR-cons2} with barriers set by the offline algorithm \eqref{AZ}. 

\captionsetup[table]{labelformat=empty}
\begin{table}[ht]
\centering
 \caption*{\label{algorithm.table} Algorithm 1. The oracle SAST rule. 
 }
\begin{tabular}{l}
  \hline
  \hline
\textbf{Intialization: }$\mathcal{A}_0=\emptyset,\;\gamma_0=\alpha$.   
\\
\medskip

\textbf{Updating the barrier:} Let $\mathcal N_d(t)= \{t-d+1, \cdots, t\}$. Sort $\{\text{Clfdr}_i: i\in \mathcal N_d(t)\}$ from \\ the smallest to largest and denote the ordered statistics as $\{\mbox{Clfdr}_{(1)}^t, \mbox{Clfdr}_{(2)}^t, \cdots\}$. \\ If $\text{Clfdr}_{(1)}^t>\alpha$, keep the same barrier $\gamma_t=\gamma_{t-1}$. Otherwise let $k=\max\{j: Q^t(j)\leq \alpha\}$, \\ where $Q^t(j)=\frac{1}{j}\sum_{i=1}^j\text{Clfdr}_{(i)}^t$, and update the barrier as $\gamma_t=\mbox{Clfdr}_{(k+1)}^t$. 

\\
\medskip

\textbf{Decision: } Let $\mathcal{R}_t=\{i\leq t: \delta_i=1\}$ and denote $|\mathcal R_t|$ its cardinality. If $\text{Clfdr}_t< \gamma_t$ \\ and $\{|\mathcal{R}_{t-1}|+1\}^{-1}\left(\sum_{i\in\mathcal{R}_{t-1}}\text{Clfdr}_i+\text{Clfdr}_t\right)\leq \alpha$, then $
\delta_t=1$. Otherwise $\delta_t=0$.\\
\hline
\hline
\end{tabular}
\end{table}

We can see that Algorithm 1 runs two parallel procedures: an online procedure for making real--time decisions and an ``offline'' procedure for determining the barrier. Thus the information of every data point has  been used twice: first $X_t$ is used for real--time decision--making at time $t$, then $X_t$ is stored as past data so that we can ``learn from experiences'' via the offline oracle. The following theorem shows that Algorithm 1 is valid for online FDR control. 
\begin{theorem}
Consider the online FDR procedure $\pmb\delta=(\delta_t: t\in \mathbb T)$, where $\delta_t$ is determined by Algorithm 1. Denote $\pmb\delta^t=(\delta_i: i\leq t; i\in \mathbb T)$. Assume that the Clfdr values are known. Then we have $\mbox{FDR}^{\rm t} (\pmb\delta^t)\leq \alpha, \; \mbox{for all $t\in \mathbb T$. }$
\end{theorem}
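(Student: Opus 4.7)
The plan is to exploit the definitional link $\text{Clfdr}_i = \mathbb{P}(\theta_i = 0 \mid X_i)$, and show that Algorithm 1 maintains, as a deterministic invariant of its sequential rejection rule, an upper bound on the sum of the Clfdr values of rejected hypotheses. The FDR bound will then follow by a short iterated-expectation argument.

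First, I would establish the main invariant: for every $t \in \mathbb{T}$,
\begin{equation*}
\sum_{i \in \mathcal{R}_t} \text{Clfdr}_i \;\leq\; \alpha \, |\mathcal{R}_t|.
\end{equation*}
The proof is by induction on $t$. At $t=0$ the set $\mathcal{R}_0$ is empty and the inequality holds trivially. At step $t$, if $\delta_t = 0$ then $\mathcal{R}_t = \mathcal{R}_{t-1}$ and the invariant is inherited. If $\delta_t = 1$, then by the decision rule of Algorithm 1 the rejection condition
\begin{equation*}
\frac{1}{|\mathcal{R}_{t-1}|+1}\left(\sum_{i\in\mathcal{R}_{t-1}}\text{Clfdr}_i + \text{Clfdr}_t\right) \leq \alpha
\end{equation*}
holds, which rearranges directly to $\sum_{i \in \mathcal{R}_t} \text{Clfdr}_i \leq (|\mathcal{R}_{t-1}|+1)\alpha = |\mathcal{R}_t| \alpha$. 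The barrier condition $\text{Clfdr}_t < \gamma_t$ plays no role here, since it can only tighten the set of rejections.

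Next, I would translate this deterministic bound into an FDR statement. The decisions $\delta_i$ and hence $\mathcal{R}_t$ are measurable with respect to $\pmb X^t$. Assuming the $X_i$'s are mutually independent (as is standard in this model), $\mathbb{E}[1-\theta_i \mid \pmb X^t] = \mathbb{E}[1-\theta_i \mid X_i] = \text{Clfdr}_i$. Conditioning on $\pmb X^t$ inside the FDR expression and pulling $\delta_i$ out,
\begin{equation*}
\text{FDR}^t(\pmb\delta^t) = \mathbb{E}\!\left[\frac{\sum_{i\leq t}\delta_i\,\mathbb{E}[1-\theta_i \mid \pmb X^t]}{|\mathcal{R}_t|\vee 1}\right] = \mathbb{E}\!\left[\frac{\sum_{i \in \mathcal{R}_t}\text{Clfdr}_i}{|\mathcal{R}_t|\vee 1}\right].
\end{equation*}
On $\{|\mathcal{R}_t| = 0\}$ the integrand is $0$; on $\{|\mathcal{R}_t| \geq 1\}$ the invariant gives an integrand bounded by $\alpha$. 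Hence $\text{FDR}^t(\pmb\delta^t) \leq \alpha \cdot \mathbb{P}(|\mathcal{R}_t| \geq 1) \leq \alpha$, which is the desired conclusion.

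There is no real obstacle in this argument; the proof is almost a direct unpacking of the algorithm's rejection criterion together with the posterior-probability interpretation of Clfdr. The only subtle point worth flagging explicitly is that the argument requires independence across $t$ (so that $\mathbb{E}[1-\theta_i \mid \pmb X^t]$ reduces to $\mathbb{E}[1-\theta_i \mid X_i]$); if the model allowed dependence across time one would need either a conditional-independence assumption on $\theta_i$ given $\pmb X^t$ or a replacement of Clfdr by a suitably defined conditional posterior given all the observed data. The barrier $\gamma_t$ and the offline subroutine only serve to improve power by preventing piggybacking, and play no part in the FDR validity argument itself.
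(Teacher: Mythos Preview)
Your proposal is correct and follows essentially the same route as the paper's proof: the paper also uses the identity $\mbox{Clfdr}_i=\mathbb{P}(\theta_i=0\mid X_i)$ and the tower property to write $\text{FDR}^t=\mathbb{E}_{\pmb X}\{(|\mathcal R_t|\vee1)^{-1}\sum_{i\in\mathcal R_t}\mbox{Clfdr}_i\}$, and then observes that the algorithm's decision rule forces this average to be at most $\alpha$ for every realization of $\pmb X$. Your write-up is in fact more careful than the paper's, since you spell out the inductive invariant, the $|\mathcal R_t|=0$ case, the irrelevance of the barrier for validity, and the implicit independence assumption needed to reduce $\mathbb{E}[1-\theta_i\mid \pmb X^t]$ to $\mbox{Clfdr}_i$.
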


\setcounter{equation}{0}

\section{Data-Driven SAST and Its Theoretical Properties}\label{SAST-DD.sec}

We first develop estimation methodologies and computational algorithms to implement the SAST rule in Section \ref{estimate.subsec}, then establish the theoretical properties of the data-driven procedure in Section \ref{SAST-theory.sec}.  
	
	\subsection{Data-driven procedure and computational algorithms}\label{estimate.subsec} 
	
We assume that the null distribution of $z$-values $f_0$ is known, which is a standard practice in the literature\footnote{In situations where the empirical null is more appropriate \citep{Efr04}, $f_0$ can be first estimated using the method in \cite{JinCai07} and then treated as known.}. The key quantities remained to be estimated are $\pi_t$ and $f_t(x)$. In our motivating applications such as queries of QPDs and anomaly detection in high--frequency time series, the databases or servers have already collected large amounts of data at the beginning of the online FDR analysis. Let $\{X_{-K_0}, \cdots, X_{-1}, X_0\}$ denote the available data and suppose we start online testing at $t=1$ with a data stream $\{X_1, X_2, \ldots\}$\footnote{In situations where the online FDR analysis must start without prior data, we suggest applying existing methods such as LOND first and then switch to SAST as more data are acquired.}.  
	
	The conditional density $f_t$ can be estimated using standard (one--sided) bivariate kernel methods \citep{Sil86}:
	\begin{equation}\label{cond-dens}
	\hat f_t(x) = \frac{\sum_{j=t-d+1}^{t-1} K_{h_t}\left({j-t}\right) K_{h_x}\left({x_j-x}\right)}{ \sum_{j=t-d+1}^{t-1} K_{h_t}\left({j-t}\right)},
	\end{equation}
	where $d \leq K_0$ is the length of the moving window that includes a pre-specified number of observations, $K(t)$ is a kernel function, $h_t$ and $h_x$ are the bandwidths, with $K_h(t)=h^{-1}K(t/h)$. 
	\begin{remark}\rm{
			In analysis of large-scale high-frequency time series data such as the NYC taxi data (Section~\ref{app:sec}), we can pre-specify $d$, say, to be 1000 to speed up the computation. This virtually has no impact on the estimator $\hat f_t$ (compared to using all previous data). Otherwise we can always set $d=t$. Note that our estimator has followed the standard practice in density estimation, which does not include $X_t$ when estimating $f_t(x)$ at time $t$. }
	\end{remark}

	Next we propose a weighted screening approach to estimate the unknown proportion $\{\pi_t: t\in \mathbb T\}$. The key idea is to use a kernel, which weights observations by their distance to $t$, to pool information from nearby time points. Let $h_t$ be the bandwidth\footnote{We recommend using the same $h_t$ in both  \eqref{cond-dens} and \eqref{q-tau-hat} to stabilize the performance.} and $K$  a kernel function satisfying $\int K(t)dt=1$, $\int tK(t)dt=0$ and $\int t^2K(t)dt <\infty$. Consider a screening procedure $\mathcal T_t(\tau)=\{t-d+1\leq i \leq t-1: P_{i}>\tau\}$, where $\tau$ is a pre-specified threshold. We propose the following estimator based on \cite{Caietal19}:
	\begin{equation}\label{q-tau-hat} 
	\hat \pi^\tau_t=1-\frac{\sum_{i\in \mathcal T_t(\tau)} K_{h_t}\left(t-i\right) }{(1-\tau)\sum_{i=t-d+1}^{t-1} K_{h_t}\left(t-i\right)}.
	\end{equation}
	
	Now we provide some intuitions of the estimator \eqref{q-tau-hat}. First, at time $t$, define $v_h(t, i)={K_{h_t}(|t-i|)}/{K_{h_t}(0)}$. We can view $m_t=\sum_{i=t-d+1}^{t-1} v_h(t, i)$ as the ``total'' number of observations at time $t$. Suppose we are interested in counting how many null $p$-values are greater than $\tau$ among the $m_t$ ``observations'' at $t$. The empirical count is given by 
	$\sum_{i\in \mathcal T_\tau} v_h(t, i)$, whereas the expected count is given by  
	$\{\sum_{i=t-d+1}^{t-1} v_h(t, i)\}\{1-\pi_t\}(1-\tau).$ Equation \eqref{q-tau-hat} can be derived by first setting equal the expected and empirical counts and then solving for $\pi_t$.  In Section \ref{SAST-theory.sec} we show that $\hat \pi^\tau_t$ is a consistent estimator of
	\beq\label{pi-tau}
	\pi^\tau_t = 1 - (1-\tau)^{-1}{\pr(P_t>\tau)},
	\eeq
	which always underestimates $\pi_t$ and guarantees (conservative) FDR control (Propositions~\ref{prop1}). 
	
\begin{remark}\rm{
		There is a bias-variance tradeoff in the choice of $\tau$ for the proposed estimator $\hat\pi^\tau_t$. We shall see that when $\tau$ increases, the ``purity'' of the screening subset $\mathcal T(\tau)$ increases, which decreases the approximation bias of $\pi^\tau_t$ (desirable). At the same time, when $\tau$ increases, the sample size for estimating $\pi^\tau_t$ will decrease, thereby increasing the variance of the estimator $\hat \pi^\tau_t$ (undesirable). The common choice of $\tau$ is 0.5. In Section \ref{implement:sec}, we discuss a data--driven algorithm that chooses $\tau$ adaptively.}
	
\end{remark}

	Combining (\ref{cond-dens}) and (\ref{q-tau-hat}), we propose to estimate the Clfdr as
	\beq\label{Clfdr-hat}
	\widehat{\text{Clfdr}}_t=\min\left\{\frac{(1-\hat{\pi}^\tau_t)f_0(x_t)}{\hat{f}_t(x_t)},1\right\}, \quad t\in \mathbb T.
	\eeq
	Our proposed data-driven rule implements Algorithm 1 by substituting $\widehat{\text{Clfdr}}_t$ in place of $\text{Clfdr}_t$. The data-driven algorithm is summarized in Algorithm 2.

	\captionsetup[table]{labelformat=empty}
	
	\begin{table}[ht]
		\centering
		\caption*{\label{algorithm.table} Algorithm 2. The data-driven SAST.
		}
		\begin{tabular}{|p{0.9\textwidth}|}
			\hline
			\hline
			\textbf{Initialization: }$\mathcal{R}_0=\emptyset,\;\gamma_0=\alpha$.   
		
			\medskip
			\textbf{Estimation: } 
$\widehat{\text{Clfdr}}_t=\min\left\{\frac{(1-\hat{\pi}_t^\tau)f_0(x_t)}{\hat{f_t}(x_t)}, 1\right\}, $ where $\hat \pi^\tau_t$ and $\hat f_t$ are defined by \eqref{q-tau-hat} and \eqref{cond-dens}, respectively.  
			
			\medskip
			
			\textbf{Updating the barrier:} Let $\mathcal N_d(t)= \{t-d+1, \cdots, t\}$. Sort $\{\widehat{\text{Clfdr}}_i: i\in \mathcal N_d(t)\}$ from the smallest to largest and denote the ordered statistics as $\{\widehat{\mbox{Clfdr}}_{(1)}^t, \widehat{\mbox{Clfdr}}_{(2)}^t, \cdots\}$. If $\widehat{\text{Clfdr}}_{(1)}^t>\alpha$, keep the same barrier $\gamma_t=\gamma_{t-1}$. Otherwise let $k=\max\{j: Q^t(j)\leq \alpha\}$, where $Q^t(j)=\frac{1}{j}\sum_{i=1}^j\widehat{\text{Clfdr}}_{(i)}^t$, and update the barrier as $\gamma_t=\widehat{\mbox{Clfdr}}_{(k+1)}^t$. 
			
			\medskip
			
			\textbf{Decision: } Let $\mathcal{R}_t=\{i\leq t: \delta_i=1\}$ and denote $|\mathcal R_t|$ its cardinality. If $\widehat{\text{Clfdr}}_t< \gamma_t$ \\ and $\{|\mathcal{R}_{t-1}|+1\}^{-1}\left(\sum_{i\in\mathcal{R}_{t-1}}\widehat{\text{Clfdr}}_i+\widehat{\text{Clfdr}}_t\right)\leq \alpha$, then $
			\delta_t=1$. Otherwise $\delta_t=0$.\\
			\hline
			\hline
		\end{tabular}
	\end{table}

	\subsection{Theoretical properties of data--driven SAST}\label{SAST-theory.sec}
	
	This section aims to show that the data--driven SAST procedure is asymptotically valid for online FDR control. Our theoretical analysis is divided into three steps. The first step (Proposition \ref{prop1}) shows that a hypothetical rule, which substitutes  
	\beq\label{Clfdr-tau}
	\mbox{Clfdr}_t^{\tau}=\dfrac{(1-\pi_t^\tau)f_0(x_t)}{f_t(x_t)}
	\eeq
	in place of $\mbox{Clfdr}_t$ in Algorithm 1, is conservative for online FDR control. 
	
	\begin{proposition}\label{prop1}
		Consider $\pi^\tau_t$ defined by \eqref{pi-tau}, then we have $\pi^\tau_t\leq \pi_t$ and $\mbox{Clfdr}_t\leq \mbox{Clfdr}_t^{\tau}$. Hence the hypothetical rule using \eqref{Clfdr-tau} is valid (and conservative) for online FDR control.
	\end{proposition}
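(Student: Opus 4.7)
The plan is to prove the two stated inequalities algebraically and then deduce FDR validity by pushing the argument of Theorem~1 through with $\text{Clfdr}_t^\tau$ in place of $\text{Clfdr}_t$.

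First I would decompose the screening probability under Model~\eqref{hmix-model}. Since $F_0$ is known and the $p$-value $P_t$ is uniform on $[0,1]$ under the null, the law of total probability gives
\begin{equation*}
\mathbb{P}(P_t > \tau) \;=\; (1-\pi_t)(1-\tau) \;+\; \pi_t\,\mathbb{P}(P_t > \tau \mid \theta_t = 1).
\end{equation*}
Plugging this into the definition \eqref{pi-tau} of $\pi^\tau_t$ and simplifying yields the identity
\begin{equation*}
\pi^\tau_t \;=\; \pi_t\Bigl[1 - (1-\tau)^{-1}\,\mathbb{P}(P_t > \tau \mid \theta_t = 1)\Bigr].
\end{equation*}
The bracketed factor lies in $[0,1]$ because $\mathbb{P}(P_t>\tau\mid\theta_t=1)\in[0,1-\tau]$ is nonnegative and is bounded above by $1-\tau$ whenever the alternative $p$-value distribution stochastically dominates the uniform (or, more weakly, because $\mathbb{P}(P_t>\tau\mid\theta_t=1)\geq 0$ already delivers the required upper bound on $\pi^\tau_t$). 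Either way, $\pi^\tau_t \leq \pi_t$ follows at once.

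The second claim $\text{Clfdr}_t \leq \text{Clfdr}_t^\tau$ is then immediate: both quantities share the same denominator $f_t(x_t)$ and the same factor $f_0(x_t)\geq 0$, so the ordering $1-\pi^\tau_t \geq 1-\pi_t$ lifts directly to the ratios. This already gives the pointwise domination stated in the proposition.

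For the FDR validity of the hypothetical rule, the idea is that substituting $\text{Clfdr}_t^\tau$ for $\text{Clfdr}_t$ in Algorithm~1 only tightens both of the two screening mechanisms used by SAST. Specifically, the offline sub-routine used to set the barrier $\gamma_t$ operates on the ordered values of the surrogate; since $\text{Clfdr}_i^\tau \geq \text{Clfdr}_i$ pointwise, the step-up statistic $Q^t(j)=j^{-1}\sum_{i\leq j}\text{Clfdr}_{(i),\tau}^t$ dominates its true-Clfdr counterpart, so the barrier selected from the surrogate is no larger in rank. Likewise, the running-average rejection constraint $(|\mathcal R_{t-1}|+1)^{-1}\bigl(\sum_{i\in\mathcal R_{t-1}}\text{Clfdr}_i^\tau + \text{Clfdr}_t^\tau\bigr)\leq\alpha$ is strictly harder to satisfy than the analogous constraint on the true Clfdr. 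Thus the surrogate rule is a more conservative instance of the SAST construction. Because the FDR-validity argument underlying Theorem~1 relies only on an upper bound on the posterior null probability $\mathbb{E}[1-\theta_t\mid X_t]$ appearing in the constraint, and $\text{Clfdr}_t^\tau$ is precisely such an upper bound (larger than the true posterior null probability by the inequality just established), the same argument applies verbatim and yields $\text{FDR}^t \leq \alpha$ at every $t$.

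The main obstacle is making the last step rigorous without repeating the entire proof of Theorem~1. I would handle this by phrasing the FDR bound in Theorem~1 in a way that isolates the single inequality $\mathbb{E}[1-\theta_t\mid X_t]\leq \text{Clfdr}_t^\star$ where $\text{Clfdr}_t^\star$ denotes whichever surrogate drives the decision rule, and then noting that this assumption is satisfied by both the oracle ($\text{Clfdr}_t^\star = \text{Clfdr}_t$) and the hypothetical conservative rule ($\text{Clfdr}_t^\star = \text{Clfdr}_t^\tau$). Everything else in the FDR argument is driven solely by the sequential averaging constraint and is indifferent to which surrogate is used.
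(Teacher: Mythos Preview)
Your argument is correct and matches the paper's proof in substance: both derive $\pi^\tau_t\leq\pi_t$ by lower-bounding $\mathbb{P}(P_t>\tau)$ via its null component (you condition on $\theta_t$ and drop the nonnegative alternative term, the paper drops $\pi_t f_{1t}$ in the integral---these are the same maneuver), and both close the FDR claim by observing that the rejection set $\mathcal R$ produced by the surrogate rule satisfies $(|\mathcal R|\vee 1)^{-1}\sum_{i\in\mathcal R}\text{Clfdr}_i^\tau\leq\alpha$ by construction, so the pointwise bound $\text{Clfdr}_i\leq\text{Clfdr}_i^\tau$ pushes this through to the true FDR via the double-expectation identity in Theorem~1. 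Your paragraph about the barrier being ``more conservative'' is a harmless detour but not needed: the FDR bound does not depend on comparing the surrogate's rejection set to the oracle's, only on the running-average constraint holding for whatever $\mathcal R$ the surrogate rule actually produces.
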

	
	The second step (Proposition \ref{SAWS_prop}) shows that $\widehat{\mbox{Clfdr}_t}$ is a consistent estimator of $\mbox{Clfdr}_t^{\tau}$. We prove the result by appealing to the infill--asymptotics framework (\citealp{Ste12}), which converts the set of time points $\{1, 2, \cdots, t\}$ on a growing domain to a set of points that lie on a fixed-domain regular grid: $\{\frac{1}{t},\frac{2}{t},...,\frac{t-1}{t}, 1\}$. The discussions in \cite{Ste12} indicate that the in-fill model is equivalent to the growing domain model under mild conditions: When $t\rightarrow \infty$, the asymptotic arguments, which respectively correspond to letting the grid become denser and denser in the fixed interval $(0,1]$ and letting the domain $\{1, 2, \cdots, t\}$ to grow to infinity, can be essentially established in the same manner. We state the fixed domain theory as it naturally connects to the familiar density estimation theory, where the notations and regularity conditions are standard and easy to understand. The growing domain version of the theory is briefly discussed in Appendix~\ref{grow}.

	We can similarly define the bivariate density estimator and the following conditional proportion estimator: 
	\beq\label{pi-tau2}
	\hat{\pi}^{\tau}_t=1-\frac{\sum_{i\in \mathcal T_t(\tau)} K_{h_t}\left(1-i/t\right) }{(1-\tau)\sum_{i=t-d+1}^{t-1} K_{h_t}\left(1-i/t\right)}.
	\eeq 
The two estimators \eqref{q-tau-hat} and \eqref{pi-tau2} are essentially identical (with rescaled bandwidths). 	
	
We state the following regularity conditions. Condition (A1) requires that $f_t(x)$ is smooth in $t$. Conditions (A2) to (A4) are standard in density estimation theory; see, for example, \citep{WanJon94}. 

\noindent\textbf{(A1)}: For any $s\in(0,1]$ and $\epsilon>0$, $\exists \delta$ such that if $|s-s'|\leq \delta, s'\in(0,1]$ then
		$\int|f_s(x)-f_{s'}(x)|dx<\epsilon$.
		
\noindent\textbf{(A2)}: $h_x\rightarrow 0$, $h_t\rightarrow 0$ and $th_xh_t\rightarrow \infty$.
				
\noindent\textbf{(A3)}: $f_j(x)<C$ and $\int |f^{''}_j(x)|dx<C $  for all $j$.

\noindent\textbf{(A4)}: $dh_t\rightarrow \infty$ and $d\geq cth_t$ for some $c>0$.
	
	\begin{proposition}\label{SAWS_prop}
		Suppose (A1)--(A4) hold, then $\widehat{\text{Clfdr}}_t\xrightarrow{p} \text{Clfdr}_t^\tau.$ 
	\end{proposition}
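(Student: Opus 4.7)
The plan is to reduce the convergence of $\widehat{\text{Clfdr}}_t$ to the separate convergence of its two building blocks $\hat f_t(x_t)$ and $\hat\pi_t^\tau$, and then apply the continuous mapping theorem. Since $\widehat{\text{Clfdr}}_t = \min\{(1-\hat\pi_t^\tau) f_0(x_t)/\hat f_t(x_t),\, 1\}$ and the map $(a,b) \mapsto \min\{(1-a)f_0(x_t)/b,\,1\}$ is continuous at $(\pi_t^\tau, f_t(x_t))$ whenever $f_t(x_t)>0$, it suffices to establish $\hat f_t(x_t) \xrightarrow{p} f_t(x_t)$ and $\hat\pi_t^\tau \xrightarrow{p} \pi_t^\tau$ separately.

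For $\hat f_t(x_t)$, I would carry out a standard bias--variance decomposition for a two-dimensional kernel estimator. Using (A3) and a Taylor expansion in $x$ gives a spatial bias of order $h_x^2$, while (A1) yields a temporal bias that vanishes with $h_t$; both go to zero under (A2). The variance is computed by summing the second moment of the kernel-weighted observations over independent time points, giving an order $1/(th_x h_t)$, which is $o(1)$ by (A2). Chebyshev's inequality then yields convergence in probability.

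For $\hat\pi_t^\tau$, I first derive from (A1) that $s \mapsto \pi_s^\tau = 1 - (1-\tau)^{-1}\mathbb{P}(P_s>\tau)$ is continuous in $s$, because $|\mathbb{P}(P_s>\tau) - \mathbb{P}(P_{s'}>\tau)| \leq \int |f_s - f_{s'}|\,dx$. Writing $\hat\pi_t^\tau = 1 - N_t/D_t$ with $N_t = \sum_{i=t-d+1}^{t-1} K_{h_t}(1-i/t)\mathbb{I}(P_i>\tau)$ and $D_t = (1-\tau)\sum_{i=t-d+1}^{t-1} K_{h_t}(1-i/t)$, the identity $\mathbb{P}(P_i>\tau) = (1-\tau)(1-\pi_i^\tau)$ makes the mean of $N_t/D_t$ a kernel-weighted average of $1-\pi_i^\tau$, which tends to $1-\pi_t^\tau$ by continuity of $\pi_\cdot^\tau$ and by (A4), which ensures that enough observations fall inside the kernel support. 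A direct second-moment calculation gives $\mathrm{Var}(N_t) = O(t/h_t)$ and $D_t = \Theta(t)$, so $\mathrm{Var}(N_t/D_t) = O(1/(th_t)) \to 0$ by (A2). Slutsky's lemma then closes the argument.

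The main technical obstacle will be the one-sided nature of the kernel window: since both estimators use only past observations $i < t$, the usual Riemann-sum approximations by integrals over $\mathbb{R}$ must be replaced by integrals over the half-line, and the effective normalizing constant needs careful handling. I would address this either by taking $K$ to be supported on a one-sided interval or by verifying that $\sum_{i=t-d+1}^{t-1} K_{h_t}(1-i/t) = \Theta(t)$ under (A2) and (A4). Once this boundary issue is settled, the remaining steps are standard extensions of classical kernel density theory.
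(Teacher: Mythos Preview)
Your overall strategy---reduce to separate consistency of $\hat f_t$ and $\hat\pi_t^\tau$, then combine---is the same as the paper's, and your treatment of $\hat\pi_t^\tau$ (including the observation $|\mathbb{P}(P_s>\tau)-\mathbb{P}(P_{s'}>\tau)|\le\int|f_s-f_{s'}|\,dx$) and all the variance orders are correct and match the paper's computations. There is, however, one genuine gap in the density part.

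You claim that ``(A1) yields a temporal bias that vanishes with $h_t$'' and then plan to apply the continuous mapping theorem at the point $x_t$. But (A1) is an $L^1$ condition: it gives $\int|f_s-f_{s'}|\,dx<\epsilon$, not $|f_s(x)-f_{s'}(x)|\to 0$ for a fixed $x$. Under (A1)--(A4) one can have $f_s(x_0)-f_{s'}(x_0)$ bounded away from zero while $\|f_s-f_{s'}\|_1$ is arbitrarily small (place a bounded bump of fixed height and shrinking width; this respects the bound $f_j<C$ in (A3)). Hence the pointwise bias $\mathbb{E}\hat f_t(x)-f_t(x)$ need not vanish at a given $x$, and the pointwise consistency $\hat f_t(x_t)\xrightarrow{p}f_t(x_t)$ cannot be extracted directly from (A1) by a bias--variance argument.

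The paper handles this by working in integrated $L^2$: it first proves $\mathbb{E}\int\{\hat f_t(x)-f_t(x)\}^2dx\to 0$, for which (A1) is exactly the right hypothesis on the temporal bias, and similarly $\mathbb{E}\|\hat\pi_t^\tau-\pi_t^\tau\|^2\to 0$. It then combines these via an explicit algebraic decomposition of $\widehat{\mathrm{Clfdr}}_t-\mathrm{Clfdr}_t^\tau$ (restricting to a high-probability set on which $\hat f_t$ and $f_t$ are bounded above and below) to obtain $\mathbb{E}\|\widehat{\mathrm{Clfdr}}_t-\mathrm{Clfdr}_t^\tau\|^2\to 0$, and finally passes to convergence in probability. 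Your continuous-mapping route can be salvaged, but only after first establishing integrated MSE for $\hat f_t$ and then converting to consistency at the random $X_t$ via $\mathbb{E}|\hat f_t(X_t)-f_t(X_t)|^2\le C\,\mathbb{E}\int\{\hat f_t(x)-f_t(x)\}^2dx$ using the bound in (A3); a purely pointwise argument from (A1) does not go through.
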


	In the third step of our theoretical analysis (Theorem \ref{thm2}), we establish the asymptotic validity of the data-driven SAST procedure for online FDR control.

\begin{theorem}\label{thm2}
			Assume the conditions in Proposition \ref{SAWS_prop} hold. Then for any given time $t$, the data-driven SAST rule (Algorithm 2) controls the $\text{FDR}^t$ at level $\alpha$ asymptotically.
		\end{theorem}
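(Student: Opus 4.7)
My plan is a three-step sandwich argument chaining together Proposition~\ref{prop1} and Proposition~\ref{SAWS_prop}. First, I introduce an intermediate hypothetical rule $\pmb\delta^\star=(\delta_t^\star:t\in\mathbb T)$ obtained by executing Algorithm~2 verbatim but with every $\widehat{\text{Clfdr}}_i$ replaced by the deterministic quantity $\text{Clfdr}_i^\tau$ from \eqref{Clfdr-tau}. Substituting $\text{Clfdr}^\tau$ for $\widehat{\text{Clfdr}}$ in Algorithm~2 recovers exactly the oracle rule referred to in Proposition~\ref{prop1}, so that proposition (combined with the validity argument used to prove Theorem~1) yields $\text{FDR}^t(\pmb\delta^\star)\le\alpha$ for every $t\in\mathbb T$.

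Second, I would show by induction on $i$ that $\hat\delta_i\xrightarrow{p}\delta_i^\star$ for each $i\le t$. The base case is trivial. For the inductive step, condition on the event $\mathcal E_{i-1}=\{\hat\delta_j=\delta_j^\star\text{ for all }j<i\}$, which has probability at least $1-\epsilon_{i-1}\to 1$ by the inductive hypothesis. On $\mathcal E_{i-1}$ the two rules share the same past, so $\hat\delta_i$ and $\delta_i^\star$ are obtained by applying the same deterministic Borel function (a sort, the threshold selection \eqref{AZ}, and two inequality checks) to $(\widehat{\text{Clfdr}}_j)_{j\in\mathcal N_d(i)}$ and $(\text{Clfdr}_j^\tau)_{j\in\mathcal N_d(i)}$ respectively. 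Proposition~\ref{SAWS_prop} gives $\widehat{\text{Clfdr}}_j\xrightarrow{p}\text{Clfdr}_j^\tau$ for each $j\in\mathcal N_d(i)$, so the continuous mapping theorem yields agreement at time $i$ with probability tending to one, provided $\text{Clfdr}_i^\tau$ places no mass on the two critical boundaries (the barrier $\gamma_i^\star$ and the cumulative-sum threshold). This no-atom property follows from the absolute continuity of $f_t$ guaranteed by (A1)--(A3). A union bound over the finitely many indices $i\le t$ upgrades to joint convergence $\hat{\pmb\delta}^t\xrightarrow{p}\pmb\delta^{\star,t}$.

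Third, because $\text{FDP}^t$ is a bounded $[0,1]$-valued functional of $\pmb\delta^t$ that is continuous wherever the denominator is nonzero, bounded convergence delivers
\begin{equation*}
\text{FDR}^t(\hat{\pmb\delta})=\mathbb E\!\left[\text{FDP}^t(\hat{\pmb\delta})\right]\longrightarrow \mathbb E\!\left[\text{FDP}^t(\pmb\delta^\star)\right]=\text{FDR}^t(\pmb\delta^\star)\le\alpha,
\end{equation*}
which is the asymptotic control claimed.

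The main technical obstacle will be the cascade phenomenon underlying step two: a single disagreement between $\hat\delta_j$ and $\delta_j^\star$ at an early time contaminates both the barrier $\gamma_i$ and the cumulative sum $\sum_{k\in\mathcal R_{i-1}}\widehat{\text{Clfdr}}_k$ for every subsequent $i$, so a naive pointwise consistency argument cannot succeed. The induction above sidesteps this by coupling the two processes through their shared history on $\mathcal E_{i-1}$ and reducing cascade control to finitely many no-atom events; verifying the no-atom property rigorously under (A1)--(A3) is routine but must be handled carefully, since it is precisely what lets the continuous mapping theorem transfer convergence of $\widehat{\text{Clfdr}}$ into convergence of the indicator-valued decisions $\hat\delta_i$.
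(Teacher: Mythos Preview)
Your coupling strategy differs fundamentally from the paper's argument and, as written, has a genuine gap tied to the asymptotic regime.

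Proposition~\ref{SAWS_prop} delivers $\widehat{\text{Clfdr}}_i\xrightarrow{p}\text{Clfdr}_i^\tau$ only as the index $i$ (equivalently, the number of past observations feeding the kernel estimates) tends to infinity. For any \emph{fixed} early index $i$, the difference $\widehat{\text{Clfdr}}_i-\text{Clfdr}_i^\tau$ is simply a nondegenerate random variable; there is no parameter left to send to a limit. Consequently your inductive claim $\hat\delta_i\xrightarrow{p}\delta_i^\star$ has no content for small $i$, the event $\mathcal E_{i-1}$ need not have probability close to one, and the cascade you identify is not neutralised---it is triggered immediately. Moreover, since the relevant asymptotic is $t\to\infty$, the ``finitely many indices $i\le t$'' over which you union-bound is a growing set, so even if each term were $o(1)$ the conclusion would not follow without a rate.

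The paper sidesteps the coupling entirely. It works directly with the \emph{data-driven} rejection set $\mathcal R_t$ produced by Algorithm~2 and never compares $\hat{\pmb\delta}$ to $\pmb\delta^\star$. The key observation is that, by construction, $|\mathcal R_t|^{-1}\sum_{i\in\mathcal R_t}\widehat{\text{Clfdr}}_i\le\alpha$ always, while
\[
\text{FDR}^t(\hat{\pmb\delta})=\mathbb E_{\X}\!\left[\frac{1}{|\mathcal R_t|}\sum_{i\in\mathcal R_t}\text{Clfdr}_i\right]
\le \mathbb E_{\X}\!\left[\frac{1}{|\mathcal R_t|}\sum_{i\in\mathcal R_t}\text{Clfdr}_i^\tau\right].
\]
One then shows $|\mathcal R_t|\to\infty$ and invokes a Ces\`aro-type lemma (if $a_n\xrightarrow{p}0$ and $|a_n|$ is bounded then $n^{-1}\sum_{i\le n}a_i\xrightarrow{p}0$) applied to $a_i=\widehat{\text{Clfdr}}_i-\text{Clfdr}_i^\tau$. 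This washes out the uncontrolled early-index errors in the growing average, giving $|\mathcal R_t|^{-1}\sum_{i\in\mathcal R_t}\text{Clfdr}_i^\tau\le\alpha+o_p(1)$ and hence $\text{FDR}^t\le\alpha+o(1)$. The essential idea you are missing is that one should bound the FDP of the data-driven rule \emph{directly} via averaging, rather than attempt to transfer validity from a hypothetical rule through decision-by-decision agreement.
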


	\subsection{Theory for data streams with fixed distributions}
	
	SAST learns from past decisions and improves its performance over time through the assistance from an offline oracle. The barrier $\gamma_t$ would become more informative as more tests are conducted. Specifically, the initial barrier is set to be $\alpha$ at time $t=1$, which is very conservative. In the special case when the mixture model has fixed $\pi_t$ and $f_t$ over time, we can show that the barrier $\gamma_t$ would converge to $\gamma_{OR}$, where $\gamma_{OR}$ is the optimal threshold of the ``offline'' oracle procedure in Section \ref{oracle-rule.subsec}. Hence, provided that the capacity allows, the operation of \eqref{FDR-cons2} 
implies that SAST behaves like an oracle that sees all data points (including future ones). Our numerical results show that the FDR levels of SAST are conservative at the beginning but the FDR becomes closer to $\alpha$ as we sequentially update the barrier with information from more time points.

		\begin{theorem}
Assume conditions from Theorem \ref{thm2} holds. Then under the model with $\pi_t\equiv \pi$ and $f_t\equiv f$, the data-driven barrier $\hat\gamma_t \goto \gamma_{OR}$ when $t\goto\infty$, where $\gamma_{OR}$ is the optimal threshold of the oracle FDR procedure for simultaneous testing defined in Section \ref{oracle-rule.subsec}.   
		\end{theorem}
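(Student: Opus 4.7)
The strategy is to reduce the theorem to a law-of-large-numbers argument for the offline step-up procedure applied to an i.i.d. sliding window, followed by an argmax-continuity step. Note that the claim concerns only the barrier $\hat\gamma_t$, which is produced by the offline routine on $\mathcal{N}_d(t)$; the capacity/decision half of SAST does not enter. Under the fixed-distribution model, the observations $\{X_i\}_{i\in\mathcal{N}_d(t)}$ are i.i.d.\ draws from the mixture $f=(1-\pi)f_0+\pi f_1$, so the true values $\text{Clfdr}(X_i)$ are i.i.d.\ with a common distribution $G$ on $[0,1]$; likewise $\text{Clfdr}^\tau(X_i)$ are i.i.d.\ with distribution $G^\tau$. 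Define the population truncated mean $Q(u) = \mathbb{E}[\text{Clfdr}(X)\mid \text{Clfdr}(X)\leq u]$ and identify $\gamma_{OR} = \sup\{u : Q(u)\leq\alpha\}$, the population analog of the step-up cutoff in \eqref{AZ}.

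The first substantive step combines Proposition \ref{SAWS_prop} (pointwise consistency $\widehat{\text{Clfdr}}_i \xrightarrow{p} \text{Clfdr}^\tau(X_i)$) with the Glivenko--Cantelli theorem on the window. Because $d \geq c\, t h_t \to \infty$ by (A4), the empirical c.d.f.\ of $\{\widehat{\text{Clfdr}}_i: i\in\mathcal{N}_d(t)\}$ converges uniformly in probability to $G^\tau$. At every continuity point $u$, this yields convergence of the ordered statistic $\widehat{\text{Clfdr}}^t_{(\lfloor ud\rfloor)}$ to the $u$-quantile of $G^\tau$, and therefore convergence of the running average $Q^t(\lfloor ud\rfloor)$ to the population truncated mean of $\text{Clfdr}^\tau$ up to that quantile. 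The cutoff index $k=\max\{j:Q^t(j)\leq\alpha\}$ then corresponds to the crossing level of this truncated-mean curve at $\alpha$.

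The third step is a standard argmax-continuity argument. Assuming $Q$ is continuous and strictly increasing at $\gamma_{OR}$ (a mild non-degeneracy condition ensured by continuity of the mixture density), for every $\epsilon>0$ we have $Q(\gamma_{OR}-\epsilon)<\alpha<Q(\gamma_{OR}+\epsilon)$; uniform convergence of $Q^t$ on the quantile scale then forces $\hat\gamma_t \in(\gamma_{OR}-\epsilon,\gamma_{OR}+\epsilon)$ with probability tending to one, giving $\hat\gamma_t\to\gamma_{OR}$.

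The main obstacle is twofold. First, the step-up operator is a supremum over indices, so small uniform fluctuations of $Q^t$ near $\alpha$ can shift the selected $k$ by $O(d)$; controlling this requires transversality of the crossing $Q(u)=\alpha$ at $u=\gamma_{OR}$, together with a uniform (not pointwise) version of the LLN for the truncated means. Second, Proposition \ref{SAWS_prop} delivers consistency toward $\text{Clfdr}^\tau$ rather than $\text{Clfdr}$ itself, so identifying the limit as $\gamma_{OR}$ (rather than the $\tau$-version) requires a reconciliation step: either one takes $\tau\to 1$ at an appropriate rate so $\pi^\tau\to\pi$, or one invokes a tail condition on the alternative $p$-value distribution under which $\pi^\tau=\pi$ exactly. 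These two refinements constitute the nonroutine part of the argument; everything else follows from standard kernel density and empirical-process asymptotics already established in the preceding propositions.
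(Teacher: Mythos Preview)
Your approach is correct and parallels the paper's at the high level (reduce the barrier to the crossing point of a population curve, then invoke a continuity argument), but the technical execution differs. The paper parameterizes not via your conditional mean $Q$ but via the unnormalized function $U_\infty(\gamma)=\mathbb{E}\{(\mbox{Clfdr}^\tau-\alpha)\mathbb{I}(\mbox{Clfdr}^\tau<\gamma)\}$, whose zero is the target threshold; because the sample version $\hat U^t$ becomes monotone after a piecewise-linear interpolation $\hat U^t_C$, they work with the \emph{inverse} and deduce $\hat U^{t,-1}_C(0)\to\gamma_\infty$ from \emph{pointwise} convergence $\hat U^t_C(\gamma_\infty)\to 0$ plus continuity of the inverse. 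This sidesteps the uniform LLN/Glivenko--Cantelli layer that your argmax-continuity route requires, and it also avoids a wrinkle in your first step: the $\widehat{\mbox{Clfdr}}_i$ on the window are not i.i.d.\ (they share data through the kernel estimates), so Glivenko--Cantelli does not apply to them directly---the paper bridges this gap by a second-moment bound $\mathbb{E}(\hat V_i-V_i)^2=o(1)$ (its Lemma~\ref{lemma2}) followed by Chebyshev, which is the analogue you would need to insert. Finally, you correctly flag the $\tau$-reconciliation issue: the argument as written (both yours and the paper's) delivers convergence to the threshold $\gamma_\infty$ defined through $\mbox{Clfdr}^\tau$, and the paper does not explicitly spell out the identification $\gamma_\infty=\gamma_{OR}$; your proposed fixes (letting $\tau\to 1$ or a tail condition forcing $\pi^\tau=\pi$) are precisely what would close that gap.
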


\section{Simulation}

In this section, we first provide some details in implementation. Simulation studies are conducted in Section 4.2 to compare the oracle and data-driven SAST procedures with other existing online FDR rules. Section 4.3 presents an example to illustrate the merit of including a barrier in online sequential testing.

\subsection{Implementation Details}\label{implement:sec}

In our simulation, the conditional density function $\hat f_t(x)$ is estimated using R function \texttt{density}, where the bandwidths $h_x$ and $h_t$ are chosen based on \cite{Sil86}. A key step in the SAST algorithm is to estimate $\hat\pi^\tau$. We propose to choose a data-driven $\tau_{BH}$ by running BH at $\alpha=0.5$. Roughly speaking, in the subset $\tilde{\mathcal T}_t(\tau_{BH})=\{t-d+1\leq i \leq t-1: P_{i}<\tau\}$, 50\% of the cases come from the null (e.g. the expected proportion of false positives made by BH). It is anticipated that in the remaining set ${\mathcal T}_t(\tau)=\{t-d+1\leq i \leq t-1: P_{i}>\tau_{BH}\}$, which is used to construct our estimator, majority of the cases should come from the null. This data-driven scheme ensures a small bias in approximation, while maintaining a larger sample size compared to the standard choice of $\tau=0.5$. 
\subsection{Comparisons of online FDRs and MDRs}

We compare the proposed SAST procedure with its competitors for online FDR control. The following methods are included in the comparison: 

\begin{itemize}

 \item SAST with known $\pi_t$ and $f_t$ (SAST.OR, Algorithm 1) 
 
 \item SAST with estimated model parameters (SAST.DD, Algorithm 2)

 \item LOND: the method proposed by Javanmard, A. and Montanari, A. (2016). 
 
 \item LORD++: the GAI++ rule proposed by \cite{Rametal17}.  
\end{itemize}

For the general simulation setup, we choose $m=5000$ and the pre-specified FDR level $\alpha=0.05$. The data are simulated from the following model:
$$
X_t \sim (1-\pi_{t})N(0,1)+\pi_{t} N(\mu, 1).
$$
For the data--driven method, we need an initial burn--in period. In simulation we generate $500$ data points prior to $t=1$ to form an initial density estimate. The varying density and proportion estimates are updated every $200$ time points. The following simulation settings are considered:

\begin{enumerate}
\item
\textbf{Block Pattern}: $\pi_t=0.01, \; \mbox{for} \;t\in (1,1000]\cup(1200,2000]\cup(2200,3000]\cup(3200,4000]\cup(4200,5000]$; $\pi_t=0.6, \; \mbox{for} \;t\in(1000,1200]\cup(2000,2200]$; $\pi_t=0.8,\; \mbox{for} \;t\in (3000,3200]\cup(4000,4200]$. Vary $\mu$ from 2 to 4.2 with step size $0.2$. 
\item
\textbf{Constant Pattern}: $\pi_t=0.05,\;t=1,\cdots,m$. Vary $\mu$ from 2 to 4.2 with step size $0.2$.
\item
\textbf{Linear Pattern}: Vary $\pi_t$ linearly from 0 to 0.5. Vary $\mu$ from 2 to 4.2 with step size $0.5$.
\item
\textbf{Sine Pattern}: $\pi_t=(\sin\frac{2\pi t}{m}+1)/4$, $\pi_t$ ranges between 0 to 0.5, vary $\mu$ from 2 to 4.2 with step size $0.5$.
\end{enumerate}

We apply different methods at $\alpha=0.05$. The empirical FDR and MDR levels are evaluated using the average of the false discovery proportions and missed discovery proportions from 1000 replications. To investigate the performance of different methods in the online setting, we display the empirical $\mbox{FDR}^t$ and $\mbox{MDR}^t$ levels at various  time points, where the intermediate evaluation points ranges from $1500$ to $5000$ with step size 500. The results for block and constant patterns are summarized in Figure \ref{block-constant.fig}, and the results for the linear and sine patterns are summarized in Figure \ref{linear-sin.fig}. 

\begin{figure}
\centering
\includegraphics[scale=0.5]{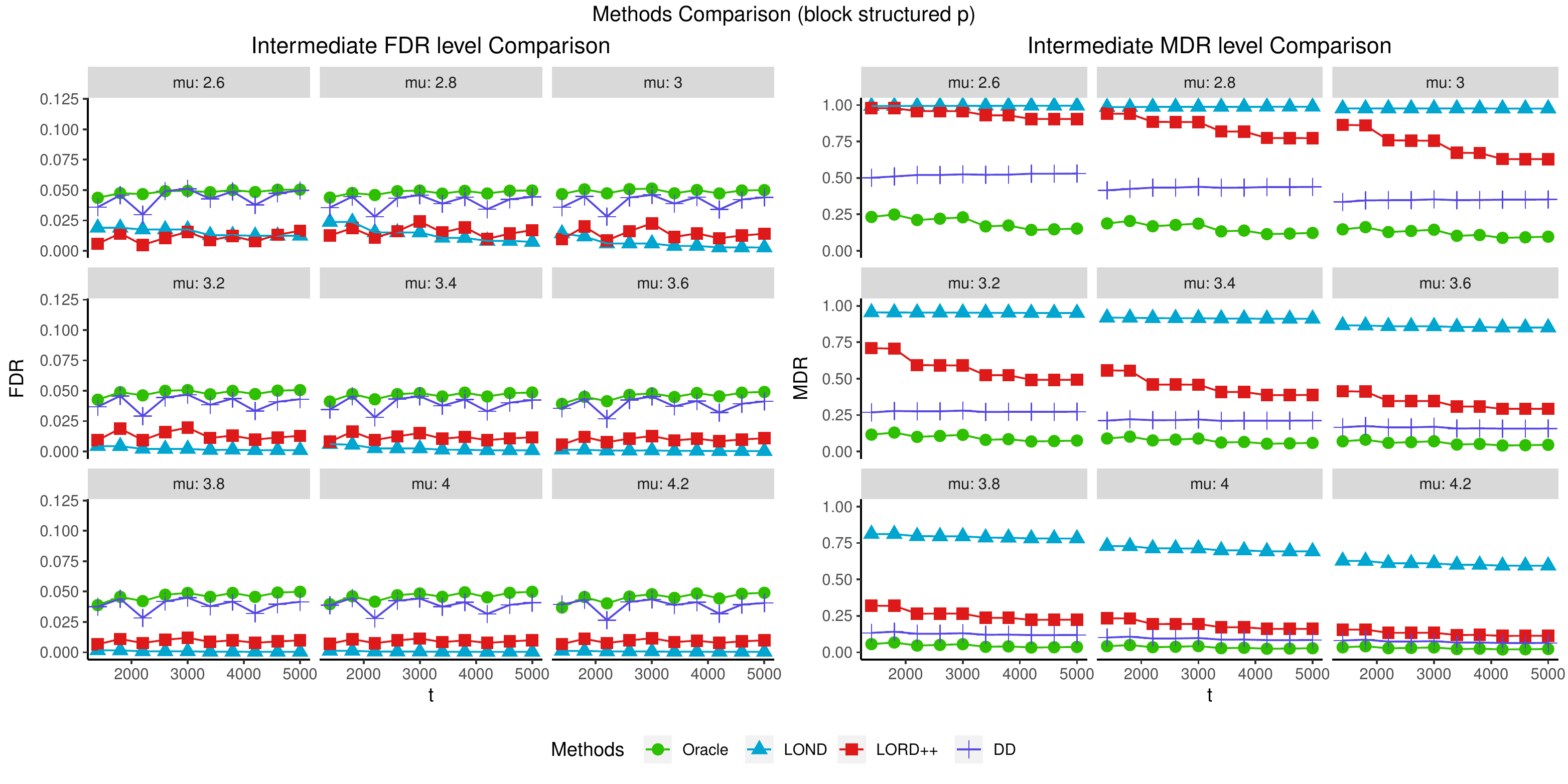}
\includegraphics[scale=0.5]{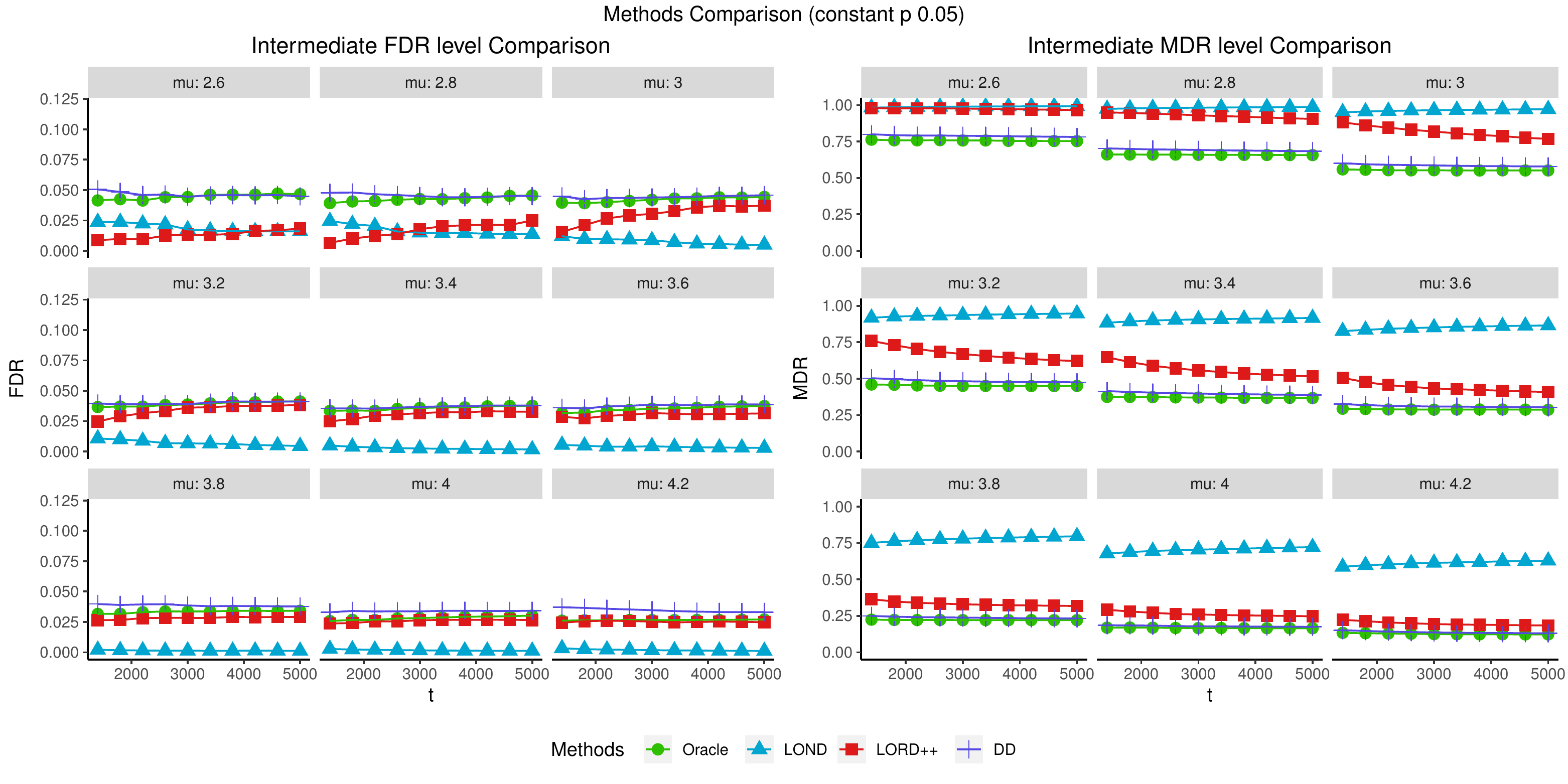}
\caption{Simulation results for Settings 1 and 2: signal proportions are varied in a block fashion and kept constant respectively. Various signal strengths are investigated as well. Our data-driven and oracle procedures provide significantly more power while controlling FDR under the nominal level in comparison with others. }\label{block-constant.fig}
\end{figure}

\begin{figure}
\centering
\includegraphics[scale=0.5]{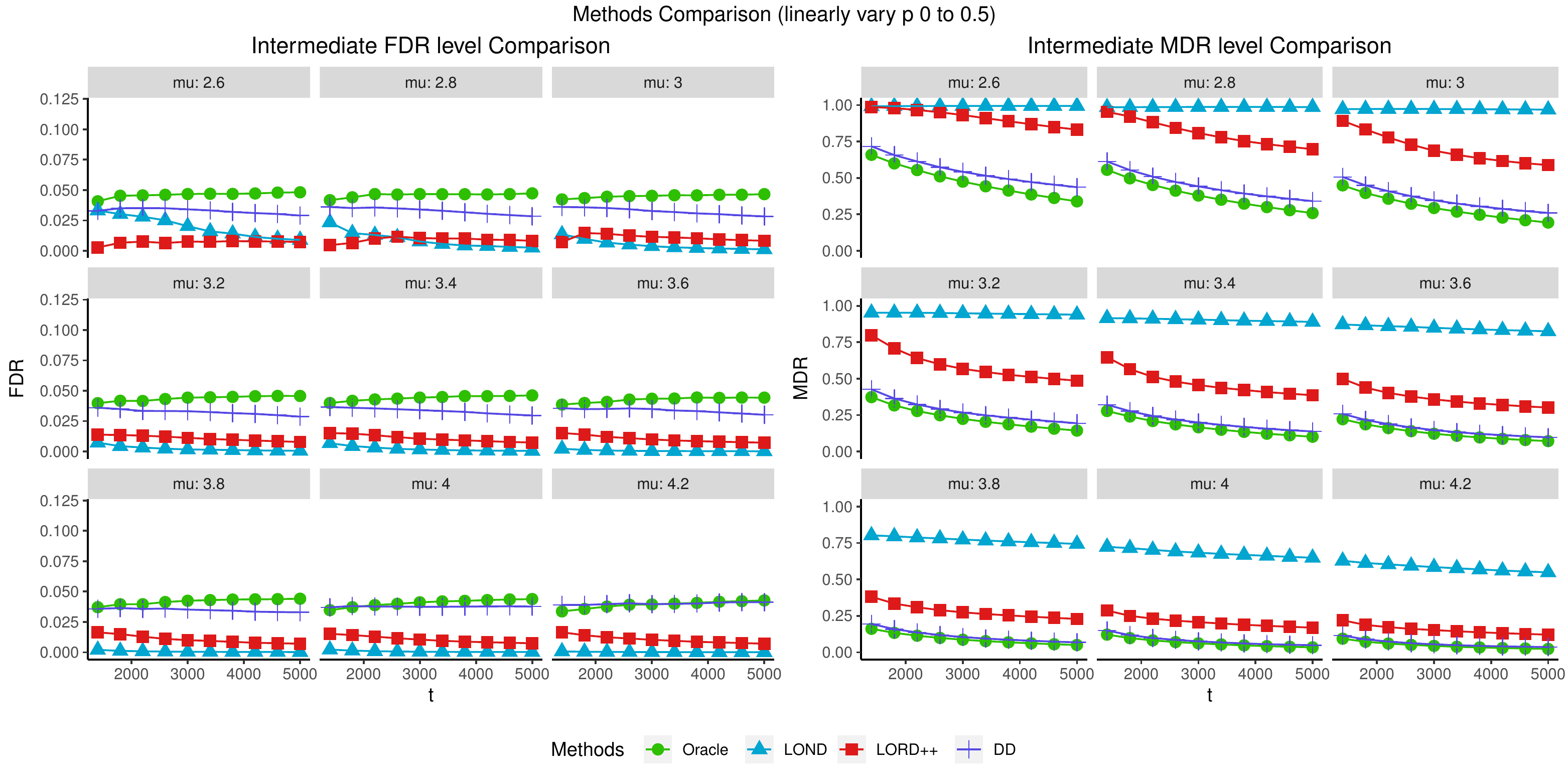}
\includegraphics[scale=0.5]{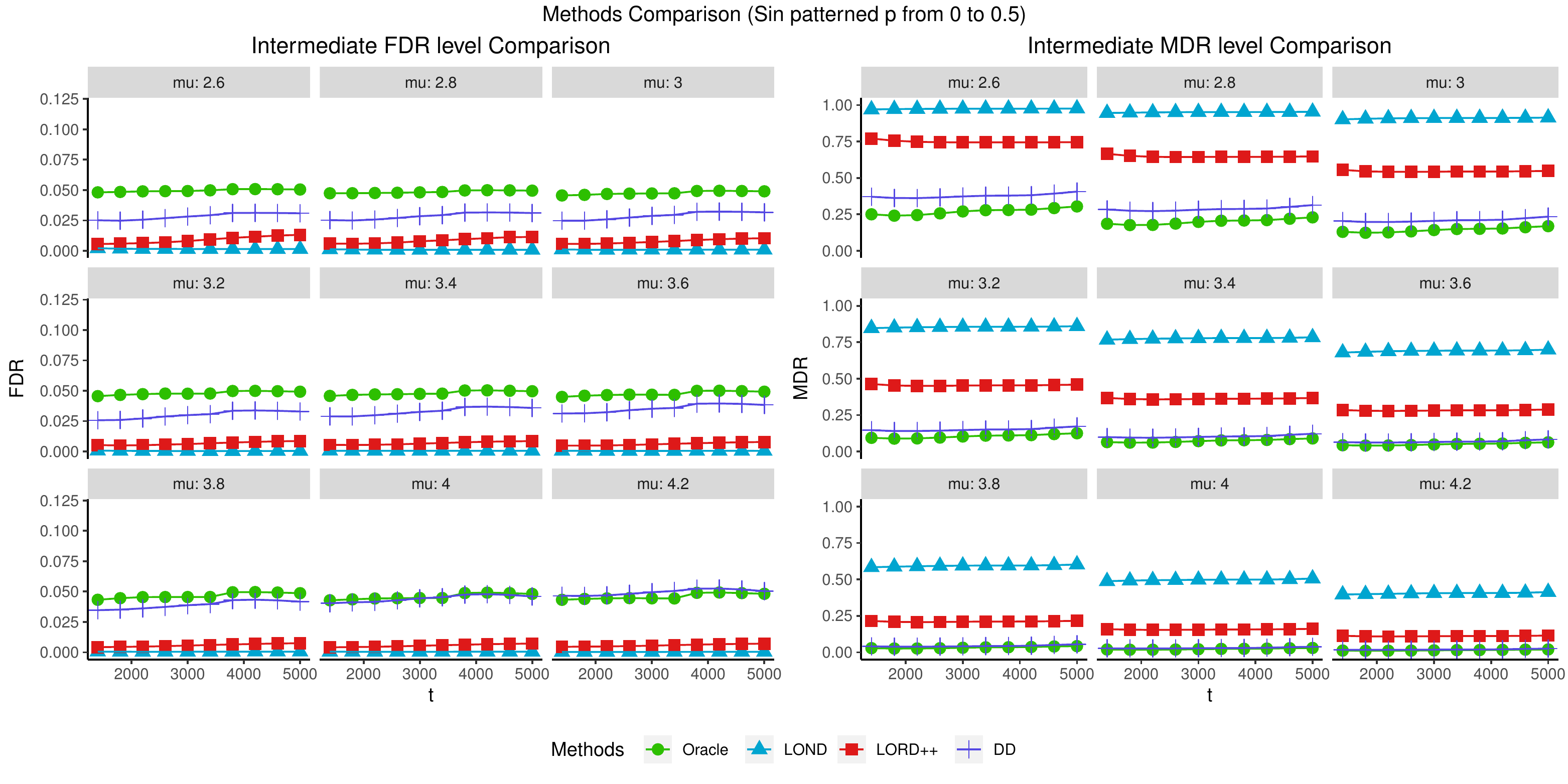}
\caption{Simulation results for Settings 3 and 4: signal proportions are varied in linear and sine patterns, respectively. Our data-driven and oracle procedures provide significantly more power while controlling FDR under the nominal level in comparison with others. }\label{linear-sin.fig}
\end{figure}

The following observations can be made from the simulation results. 
\begin{enumerate}

 \item [(a)] All methods control $\mbox{FDR}^t$ at the nominal level at all decision points being considered. SAST.OR achieves the nominal level very precisely. SAST.DD is conservative. LOND and LORD++ are more conservative compared to SAST.DD.

 \item  [(b)]  SAST.DD is inferior compared to SAST.OR. This is largely due to the conservativeness of the estimator $\hat\pi^\tau_t$. The gap in the performances between SAST.DD and SAST.OR narrows as the signal strength becomes stronger, in which situation the estimator $\hat\pi^\tau_t$ becomes more precise.

 \item  [(c)] In general LOND can be much improved by LORD++, which can be further improved by SAST.DD. The gap in power performances between SAST.DD and LORD++ narrows as the signal strength becomes stronger, in which situation it is easier to separate the signals from null cases. 
 
 \item  [(d)] When $\pi_t$ is fixed over time, the signals arrive at a constant rate and there is no informative structural information in the data stream (Setting 2: constant pattern). SAST.DD still outperforms LOND and LORD++ because our AI framework based on Clfdr precisely characterizes the gains and losses of different decisions; this not only leads to more precise FDR control but also optimizes the alpha--wealth allocation in the online setting. 
 
 
\end{enumerate}

\subsection{Effects of the barrier} \label{bar:sec}

This section presents a toy example to illustrate that the barrier, which aims to prevent the ``piggybacking'' issue (\citealp{Rametal17}), can greatly reduce the MDR by allocating existing alpha--wealth in a more cost--effective way. Consider the previous block structured setting (Setting 1 in Section 4.2). Figure \ref{fig:figbar} shows the FDR and MDR comparisons for the following methods at FDR level $\alpha=0.05$: (i) oracle SAST rule (OR); (ii) oracle SAST rule with no barrier (OR\_\,nob); (iii) data-driven SAST rule with estimated parameters (DD, Section 3); (iv) data-driven SAST rule with no barrier (DD\_\,nob).

\begin{figure}[ht]
	\includegraphics[scale=0.5]{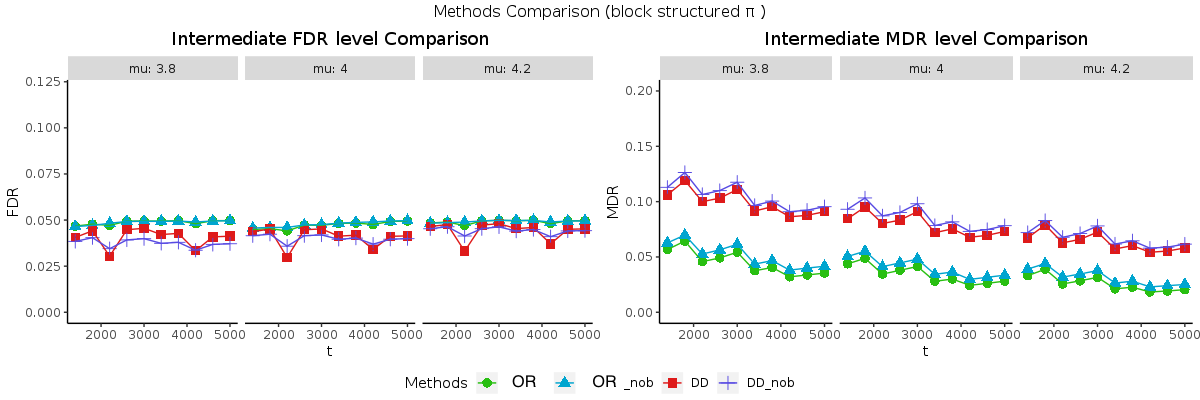}\caption{The incorporation of the barrier greatly reduces the MDR levels. }
	\label{fig:figbar}
\end{figure}

We can see from the comparison that although the FDR levels between the two oracle methods are roughly the same, the MDR levels are greatly reduced by incorporating the barrier (hence the alpha--wealth is invested  more efficiently). The same patterns can be observed for the two data-driven procedures.

\section{Applications}\label{app:sec}

Online FDR rules are useful for a wide range of scenarios. We discuss two applications, respectively for anomaly detection in large--scale time series data  and genotype discovery under the QPD framework.

\subsection{Time series anomaly detection}

The NYC taxi dataset can be downloaded from the Numenta Anomaly Benchmark (NAB) repository \citep{Ahmad17}, which contains useful tools and datasets for evaluating algorithms for anomaly detection in streaming, real--time applications. The dataset records the counts of NYC taxi passengers every 30 minutes from July 1, 2014 to January 31, 2015, during which period five known anomalies had occurred (the NYC marathon, Thanksgiving, Christmas, New Years day and a snow storm). In Figure \ref{fig:nyc_taxi}, we plot the time series, with the known anomalous intervals displayed in red rectangles. 

\begin{figure}[ht]
\centering
	\includegraphics[scale=0.37]{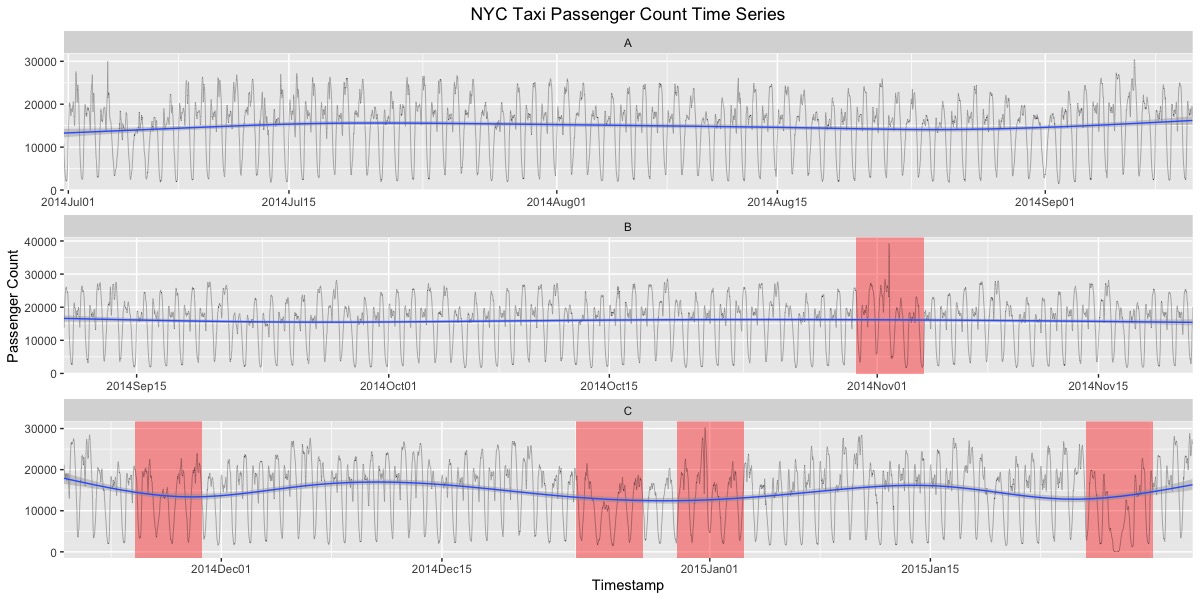}\caption{NYC Taxi passenger count time series from July 1st 2014 to Jan 31st 2015. Blue lines are Loess smoothed time series indicating the overall trend change. }
	\label{fig:nyc_taxi}
\end{figure}

We formulate the anomaly detection problem as an online sequential multiple testing problem. The basic setup can be described as follows. The null hypothesis $H_t$ corresponds to no anomaly at time $t$. We claim that an anomaly occurs at $t$ if $H_t$ is rejected. A rejection within the red intervals is considered to be a true discovery. 

The application of online FDR rules requires summarizing the stream of counts data as a sequence of $p$-values or CLfdr statistics. However, directly calculating the $p$-values based on this dataset would be problematic as the data demonstrate strong trend and seasonality patterns. We first use the R package \texttt{stlplus} to carry out an STL decomposition (Seasonal Trend decomposition using Loess smoothing; \citealp{Cle90}) to remove the seasonal and trend components. The residuals, displayed in the top 3 rows of Figure \ref{fig:remainder},
 are standardized and modeled using a two-component mixture \eqref{hmix-model}. However, as can be seen from the histogram at the bottom of Figure \ref{fig:remainder}, the null distribution is approximately normal but deviates from a standard normal. Following the method in \cite{JinCai07}, we estimate the empirical null distribution as $N(0.028,0.618)$. We apply the BH (pretending all observations are seen at once), LOND, LORD++ and SAST.DD at FDR level 0.0001. For the SAST.DD method, the neighborhood size $d$ and initial burn-in period are both chosen to be $500$. In calculating the Clfdr, $f_0(x)$ is taken as the density of the estimated empirical null $\hat F_0$. Moreover, the $p$-values are obtained by the formula $P_i=2\hat F_0(-|Z_i|)$, where $z$-scores are computed based on the residuals. Figure \ref{fig:anomaly_detected} summarizes the anomaly points detected by different methods. 

We can see that for the several anomaly time periods labeled, SAST can detect more points than other methods. Table \ref{table:anomaly_detected} summarizes the total number of rejections within the labeled time windows. It may appear counter-intuitive that SAST, being an online procedure, rejects more null hypotheses than the offline BH procedure. The reason is that the anomalies tend to appear in clusters. This structural information is captured by the Clfdr statistic, which forms the building block of SAST and leads to improved power in detecting structured signals (Section \ref{adapt:sec}).

 \begin{figure}
\centering
	\includegraphics[width=6in, height=3.5in]{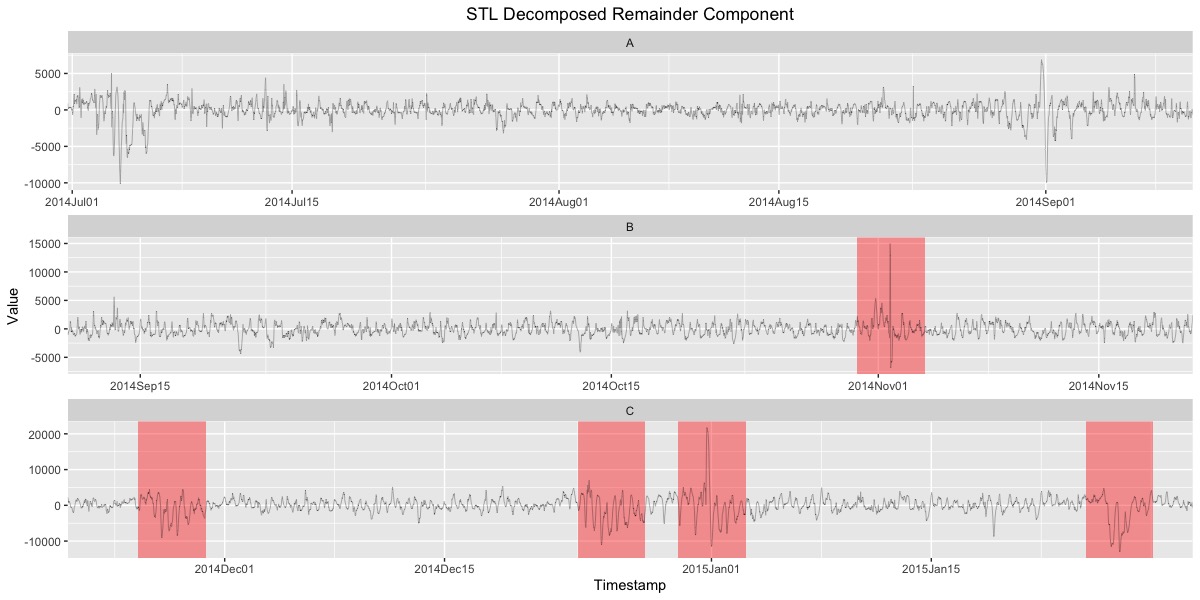}
			\includegraphics[width=6in, height=2in]{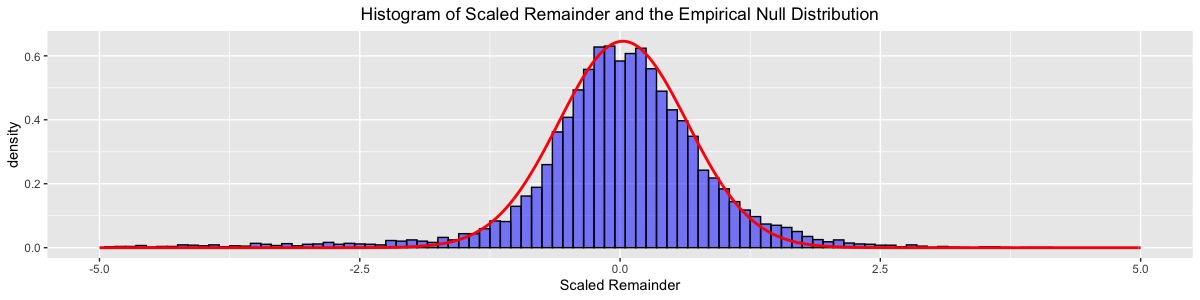}
\caption{Top three rows: Time series of remainder component from STL decomposition with the known anomaly regions marked in red rectangles. Bottom row: Histogram of the remainder term from STL decomposition, the red curve indicates the estimated empirical null distribution $N(0.028,0.618)$.}\label{fig:remainder}
\end{figure}

\begin{figure}[!ht]
	\centering
	\includegraphics[scale=0.37]{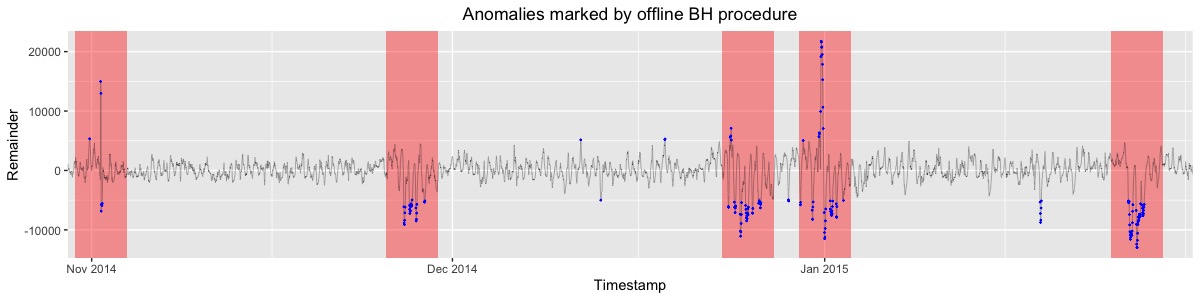}
	\includegraphics[scale=0.37]{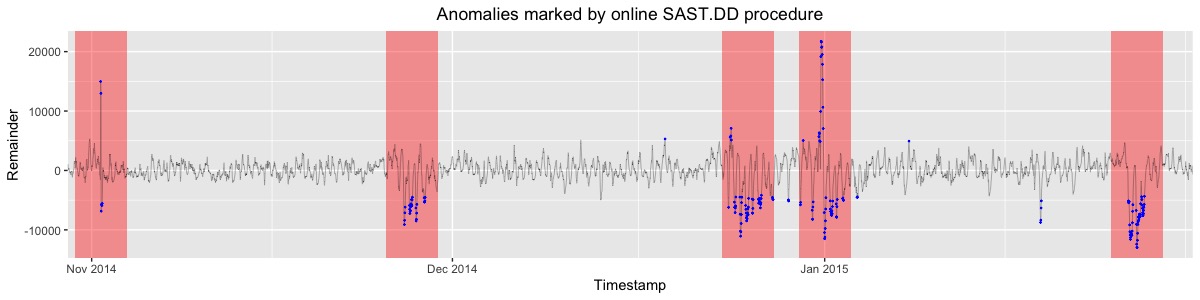}
	\includegraphics[scale=0.37]{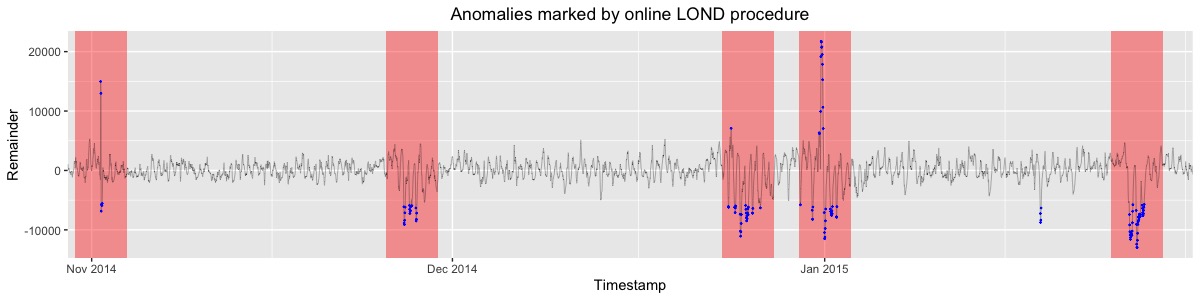}
	\includegraphics[scale=0.37]{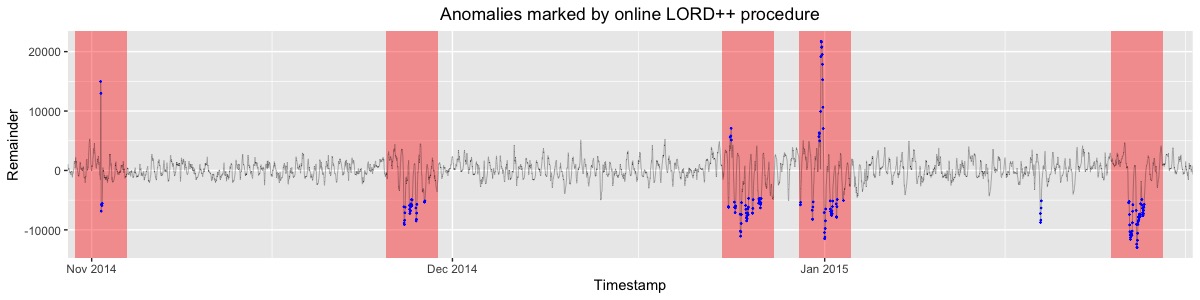}
	\caption{Anomaly points detected by various algorithms, our data-driven SAST procedure detects the most anomaly points within the labeled window marked by red rectangles. Nominal significance level chosen as 0.0001. }
	\label{fig:anomaly_detected}
\end{figure}

\begin{table}[]
\centering
\begin{tabular}{|c|c|}
\hline
\textbf{Method} & \textbf{Number of Discoveries} \\ \hline
      Offline BH procedure & 179  \\ \hline
     Online SAST.DD (Proposed) &       201  \\ \hline
	 Online LOND &   137 \\ \hline
     Online LORD++ &   178                \\ \hline
\end{tabular}
     \caption{Table 1: Number of discoveries made by various online and offline FDR procedures for the NYC taxi dataset, nominal FDR level at $0.0001$.}
     \label{table:anomaly_detected}
\end{table}

\subsection{IMPC dataset Genotype Discovery}

{In this section, we demonstrate the SAST procedure on a real dataset from the International Mouse Phenotyping Consortium (IMPC). This dataset, which has been analyzed in \cite{Kar17}, involves a large study to functionally annotate every protein coding gene by exploring the impact of gene knockouts. This dataset and resulting family of hypotheses are constantly growing as new results come in. \cite{Kar17} tested both the roles of genotype and sex as modifiers of genotype effects, resulting in two sets of $p$-values: one set for testing genotype effects, and the other for sexual dimorphism. This dataset has been widely used for comparing online FDR algorithms. Currently it is available as one of the application datasets in the R-package \texttt{OnlineFDR} that implements methods such as LORD, LOND and LORD++. In order to implement our proposed SAST procedure, we need the original $z$-scores instead of $p$-values. However, the directions of effects cannot be determined based on $p$-value alone. Hence, we transform the $p$-values into $z$-scores by introducing a Bernoulli random variable to ensure asymptotic symmetry around 0:
$z=X\Phi^{-1}(p/2)-(1-X)\Phi^{-1}(p/2),$ 
where $X\sim Ber(0.5)$\footnote{We recommend that in the future the biomedical community should report, in addition to $p$-values, the effect sizes. Thus we also know the direction and magnitude of an interesting signal. In fact, converting $z$-scores to $p$-values may lead to loss of information (cf. \citealp{SunCai07}).}. 

Table \ref{IMPC:tab} summarizes the total number of discoveries made by each method. We can see that SAST makes  more discoveries than other alpha--investing methods. Similar to the analysis in Section \ref{app:sec}, SAST rejects more hypotheses than the offline BH procedure. One possible explanation is that Clfdr is more powerful than $p$-values since it captures useful structural information in the data stream.   

\begin{table}[]
\centering
\begin{tabular}{|c|c|c|c|}
\hline
\textbf{Method} & \textbf{Genotype} & \textbf{Method type}  \\ \hline
     {SAST}  &       {12975}  & {online}    \\ \hline
     BH & 12907 & offline \\ \hline
     LORD++ & 8517 & online \\ \hline
     LOND & 2905 & online  \\ \hline
     Fixed threshold $0.0001$ & 4158 & online  \\ \hline
\end{tabular}
     \caption{Table 2: Number of discoveries made by various online and offline FDR procedures for the IMPC dataset on Genotypes, nominal FDR level at $0.05$.}\label{IMPC:tab}
\end{table}

\bibliographystyle{chicago}

\newpage

\bibliography{myrefs}

\begin{thebibliography}{}

\bibitem[\protect\astroncite{Aharoni et~al.}{2010}]{Aha10}
Aharoni, E., Neuvirth, H., and Rosset, S. (2010).
\newblock The quality preserving database: A computational framework for
  encouraging collaboration, enhancing power and controlling false discovery.
\newblock {\em IEEE/ACM transactions on computational biology and
  bioinformatics}, 8(5):1431--1437.

\bibitem[\protect\astroncite{Aharoni and Rosset}{2014}]{AhaRos14}
Aharoni, E. and Rosset, S. (2014).
\newblock Generalized $\alpha$-investing: definitions, optimality results and
  application to public databases.
\newblock {\em Journal of the Royal Statistical Society: Series B (Statistical
  Methodology)}, 76(4):771--794.

\bibitem[\protect\astroncite{Ahmad et~al.}{2017}]{Ahmad17}
Ahmad, S., Lavin, A., Purdy, S., and Agha, Z. (2017).
\newblock Unsupervised real-time anomaly detection for streaming data.
\newblock {\em Neurocomputing}, 262:134--147.

\bibitem[\protect\astroncite{Benjamini and Hochberg}{1995}]{BenHoc95}
Benjamini, Y. and Hochberg, Y. (1995).
\newblock Controlling the false discovery rate: a practical and powerful
  approach to multiple testing.
\newblock {\em J. Roy. Statist. Soc. B}, \textbf{57}:289--300.

\bibitem[\protect\astroncite{Cai and Sun}{2009}]{CaiSun09}
Cai, T.~T. and Sun, W. (2009).
\newblock Simultaneous testing of grouped hypotheses: Finding needles in
  multiple haystacks.
\newblock {\em J. Amer. Statist. Assoc.}, 104:1467--1481.

\bibitem[\protect\astroncite{Cai et~al.}{2019}]{Caietal19}
Cai, T.~T., Sun, W., and Wang, W. (2019).
\newblock {CARS}: {C}ovariate assisted ranking and screening for large-scale
  two-sample inference (with discussion).
\newblock {\em J. Roy. Statist. Soc. B}, 81(2):187--234.

\bibitem[\protect\astroncite{Cleveland et~al.}{1990}]{Cle90}
Cleveland, R.~B., Cleveland, W.~S., McRae, J.~E., and Terpenning, I. (1990).
\newblock Stl: a seasonal-trend decomposition.
\newblock {\em Journal of official statistics}, 6(1):3--73.

\bibitem[\protect\astroncite{Efron}{2004}]{Efr04}
Efron, B. (2004).
\newblock Large-scale simultaneous hypothesis testing: The choice of a null
  hypothesis.
\newblock {\em Journal of the American Statistical Association},
  99(465):96--104.

\bibitem[\protect\astroncite{Foster and Stine}{2008}]{Fos08}
Foster, D.~P. and Stine, R.~A. (2008).
\newblock $\alpha$-investing: a procedure for sequential control of expected
  false discoveries.
\newblock {\em Journal of the Royal Statistical Society: Series B (Statistical
  Methodology)}, 70(2):429--444.

\bibitem[\protect\astroncite{Genovese et~al.}{2006}]{Genetal06}
Genovese, C.~R., Roeder, K., and Wasserman, L. (2006).
\newblock False discovery control with p-value weighting.
\newblock {\em Biometrika}, 93(3):509--524.

\bibitem[\protect\astroncite{Holm}{1979}]{holm1979simple}
Holm, S. (1979).
\newblock A simple sequentially rejective multiple test procedure.
\newblock {\em Scandinavian journal of statistics}, pages 65--70.

\bibitem[\protect\astroncite{Hu et~al.}{2010}]{Huetal10}
Hu, J.~X., Zhao, H., and Zhou, H.~H. (2010).
\newblock False discovery rate control with groups.
\newblock {\em Journal of the American Statistical Association},
  105(491):1215--1227.

\bibitem[\protect\astroncite{Javanmard et~al.}{2018}]{Jav16}
Javanmard, A., Montanari, A., et~al. (2018).
\newblock Online rules for control of false discovery rate and false discovery
  exceedance.
\newblock {\em The Annals of statistics}, 46(2):526--554.

\bibitem[\protect\astroncite{Jin and Cai}{2007}]{JinCai07}
Jin, J. and Cai, T.~T. (2007).
\newblock Estimating the null and the proportional of nonnull effects in
  large-scale multiple comparisons.
\newblock {\em J. Amer. Statist. Assoc.}, \textbf{102}:495--506.

\bibitem[\protect\astroncite{Karp et~al.}{2017}]{Kar17}
Karp, N.~A., Mason, J., Beaudet, A.~L., Benjamini, Y., Bower, L., Braun, R.~E.,
  Brown, S.~D., Chesler, E.~J., Dickinson, M.~E., Flenniken, A.~M., et~al.
  (2017).
\newblock Prevalence of sexual dimorphism in mammalian phenotypic traits.
\newblock {\em Nature communications}, 8:15475.

\bibitem[\protect\astroncite{Lei and Fithian}{2018}]{LeiFit18}
Lei, L. and Fithian, W. (2018).
\newblock Adapt: an interactive procedure for multiple testing with side
  information.
\newblock {\em J. Roy. Statist. Soc. B}, 80(4):649--679.

\bibitem[\protect\astroncite{Li and Barber}{2019}]{LiBar19}
Li, A. and Barber, R.~F. (2019).
\newblock Multiple testing with the structure-adaptive benjamini--hochberg
  algorithm.
\newblock {\em Journal of the Royal Statistical Society: Series B (Statistical
  Methodology)}, 81(1):45--74.

\bibitem[\protect\astroncite{Lynch et~al.}{2017}]{Lyn17}
Lynch, G., Guo, W., Sarkar, S.~K., Finner, H., et~al. (2017).
\newblock The control of the false discovery rate in fixed sequence multiple
  testing.
\newblock {\em Electronic Journal of Statistics}, 11(2):4649--4673.

\bibitem[\protect\astroncite{Ramdas et~al.}{2017}]{Rametal17}
Ramdas, A., Yang, F., Wainwright, M.~J., and Jordan, M.~I. (2017).
\newblock Online control of the false discovery rate with decaying memory.
\newblock In {\em Advances In Neural Information Processing Systems}, pages
  5650--5659.

\bibitem[\protect\astroncite{Ramdas et~al.}{2018}]{Rametal18}
Ramdas, A., Zrnic, T., Wainwright, M., and Jordan, M. (2018).
\newblock Saffron: an adaptive algorithm for online control of the false
  discovery rate.
\newblock In {\em International Conference on Machine Learning}, pages
  4286--4294.

\bibitem[\protect\astroncite{Robertson and Wason}{2018}]{Rob18}
Robertson, D.~S. and Wason, J. (2018).
\newblock Online control of the false discovery rate in biomedical research.
\newblock {\em arXiv preprint arXiv:1809.07292}.

\bibitem[\protect\astroncite{Silverman}{1986}]{Sil86}
Silverman, B.~W. (1986).
\newblock {\em Density estimation for statistics and data analysis}, volume~26.
\newblock CRC press.

\bibitem[\protect\astroncite{Stein}{2012}]{Ste12}
Stein, M.~L. (2012).
\newblock {\em Interpolation of spatial data: some theory for kriging}.
\newblock Springer Science \& Business Media.

\bibitem[\protect\astroncite{Sun and Cai}{2007}]{SunCai07}
Sun, W. and Cai, T.~T. (2007).
\newblock Oracle and adaptive compound decision rules for false discovery rate
  control.
\newblock {\em J. Amer. Statist. Assoc.}, \textbf{102}:901--912.

\bibitem[\protect\astroncite{Wand and Jones}{1994}]{WanJon94}
Wand, M.~P. and Jones, M.~C. (1994).
\newblock {\em Kernel smoothing}.
\newblock Chapman and Hall/CRC.

\bibitem[\protect\astroncite{Xia et~al.}{2020}]{Xiaetal19}
Xia, Y., Cai, T.~T., and Sun, W. (2020+).
\newblock Gap: A general framework for information pooling in two-sample sparse
  inference.
\newblock {\em J. Am. Statist. Assoc., to appear.}

\end{thebibliography}

\setcounter{equation}{0}


\newpage

\setcounter{page}{1} 

	\appendix
\begin{center}\LARGE
Online Supplementary Material for ``Structure--Adaptive Sequential Testing for Online False Discovery Rate Control''
\end{center}

\medskip

This supplement contains the proofs of main theorems (Section A), other theoretical results (Section B), and optimality theory on simultaneous testing (Section C).

	\section{Proof of main theorems}

	\subsection{Proof of Theorem 1}
	
	Note that the Clfdr is defined as $\mbox{Clfdr}_i=\mathbb{P}(\theta_i=0|X_i)$. Then by the definition of FDR and double expectation theorem, we have:
	$$\text{FDR}_t=\mathbb{E}_{\X} \left\{  (|\mathcal{R}_{t}|\vee1)^{-1}\sum_{i\in\mathcal{R}_{t}}\text{Clfdr}_i   \right\}.$$
	By construction of the decision rule, $(|\mathcal{R}_{t}|\vee1)^{-1}\sum_{i\in\mathcal{R}_{t}}\text{Clfdr}_i\leq \alpha$ for all realization of $\pmb{X} $. It follows that $\text{FDR}_t\leq \alpha$.

	\subsection{Proof of Theorem \ref{thm2} }
	We need the following lemma:
	\begin{lemma}\label{lem4}
		Suppose $a_n\xrightarrow{p}0$ and $|a_n|$ is bounded for all $n$, then $\lim\limits_{n\rightarrow \infty}\dfrac{\sum_{i=1}^{n}a_i}{n}\xrightarrow{p}0$.
	\end{lemma}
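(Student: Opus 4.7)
The plan is to prove the lemma by passing through $L^1$ convergence. The statement concerns an arbitrary sequence of random variables that is uniformly bounded (I will take ``bounded for all $n$'' to mean there exists a constant $M<\infty$ with $|a_n|\le M$ almost surely for every $n$) and tends to $0$ in probability; the conclusion is that the Cesàro average converges to $0$ in probability.

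My first step is to upgrade convergence in probability to $L^1$ convergence using the uniform bound. For any $\epsilon>0$, split $E|a_n|=E[|a_n|\mathbb 1\{|a_n|\le\epsilon\}]+E[|a_n|\mathbb 1\{|a_n|>\epsilon\}]\le \epsilon + M\,P(|a_n|>\epsilon)$. Since $a_n\xrightarrow{p}0$, the second term tends to $0$, so $\limsup_n E|a_n|\le\epsilon$; letting $\epsilon\downarrow 0$ gives $E|a_n|\to 0$. (Equivalently, one can invoke the bounded convergence theorem after noting that every subsequence of $a_n$ has a further subsequence converging almost surely to $0$.)

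My second step is the deterministic Cesàro lemma: if a numerical sequence $b_n\to 0$ then $\frac{1}{n}\sum_{i=1}^n b_i\to 0$. Applying this with $b_n=E|a_n|$ yields $\frac{1}{n}\sum_{i=1}^n E|a_i|\to 0$.

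My third step closes the loop via Markov's inequality. For any $\epsilon>0$,
\[
P\!\left(\Bigl|\tfrac{1}{n}\sum_{i=1}^n a_i\Bigr|>\epsilon\right)\le \frac{E\bigl|\tfrac{1}{n}\sum_{i=1}^n a_i\bigr|}{\epsilon}\le \frac{1}{n\epsilon}\sum_{i=1}^n E|a_i|,
\]
and the right-hand side tends to $0$ by Step 2. This proves $\frac{1}{n}\sum_{i=1}^n a_i\xrightarrow{p}0$. The only real subtlety is the passage from convergence in probability to convergence in $L^1$, which is why the uniform bound is essential; if ``bounded for all $n$'' were interpreted as each $a_n$ being individually bounded with bounds growing in $n$, the conclusion would fail, so Step 1 is the step where the hypothesis is genuinely used and should be the main (minor) obstacle to state carefully.
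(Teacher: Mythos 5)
Your proof is correct: the passage from convergence in probability to $L^1$ convergence via the uniform bound, followed by the deterministic Ces\`{a}ro lemma and Markov's inequality, is exactly the standard argument, and you rightly identify that a \emph{uniform} bound is the interpretation needed (as holds in the paper's application, where $|a_i|=|\widehat{\text{Clfdr}}_i-\text{Clfdr}_i^\tau|\leq 1$). The paper omits its own proof of this lemma as ``elementary,'' so there is nothing to compare against; your write-up supplies precisely the argument the authors had in mind.
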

	The proof of lemma \ref{lem4} is elementary thus omitted. By definition of our algorithm, if $\widehat{\text{Clfdr}}_t\leq \alpha$ then $\delta_t=1$. Note that, for any $\epsilon>0$,
	\begin{align*}
	\mathbb{P}\left\{\sum_{i=1}^{\infty}\mathbb{I}(\text{Clfdr}_i^\tau<\alpha-\epsilon)<\infty\right\}&=	\mathbb{P}\left\{\bigcup^\infty_{M=1}\sum_{i=1}^{\infty}\mathbb{I}(\text{Clfdr}_i^\tau<\alpha-\epsilon)<M\right\}\\
	&\leq \sum_{M=1}^{\infty}\mathbb{P}\left\{ \sum_{i=1}^{\infty}\mathbb{I}(\text{Clfdr}_i^\tau<\alpha-\epsilon)<M    \right\}.
	\end{align*}
Note that $\mbox{Clfdr}_i$ is a random variable from random mixture model \eqref{hmix-model} with a non-vanishing proportion of nonzero signals, we have
$$\mathbb{P}\left\{ \sum_{i=1}^{\infty}\mathbb{I}(\text{Clfdr}_i^\tau<\alpha-\epsilon)<M  \right\}=0$$ for every $M$. We have $	\mathbb{P}\left\{\sum_{i=1}^{\infty}\mathbb{I}(\text{Clfdr}_i^\tau<\alpha-\epsilon)<\infty\right\}=0 $. Now, $\sum_{i=1}^{\infty}\mathbb{I}(\widehat{\text{Clfdr}}_i\leq \alpha)<\infty$ would imply $|\widehat{\text{Clfdr}}_i- \text{Clfdr}_i^\tau|>\epsilon $ infinitely many times. By Proposition \ref{SAWS_prop},  $\mathbb{P}(|\widehat{\text{Clfdr}}_i- \text{Clfdr}_i^\tau|>\epsilon )\rightarrow 0$. It follows that 
		$\mathbb{P}\left\{ \sum_{i=1}^{\infty}\mathbb{I}(\widehat{\text{Clfdr}}_i\leq \alpha)<\infty  \right\} =0,$ hence $|\mathcal{R}_t|\rightarrow \infty$. By  Proposition \ref{SAWS_prop} and Lemma \ref{lem4}, we have
		$$ \dfrac{\sum_{i \in \mathcal{R}_t}\widehat{\text{Clfdr}}_i- \text{Clfdr}_i^\tau}{|\mathcal{R}_t|}\xrightarrow{p} 0.$$
	
Finally, the operation of Algorithm 2 implies that $ \dfrac{\sum_{i \in \mathcal{R}_t}\widehat{\text{Clfdr}}_i}{|\mathcal{R}_t|}\leq \alpha,$
	It follows that
	$$\mbox{FDR}(\pmb{\delta})= \mathbb{E}_{\X}\left[\dfrac{\sum_{i \in \mathcal{R}_t}\text{Clfdr}_i}{|\mathcal{R}_t|}\right]\leq \mathbb{E}_{\X}\left[\dfrac{\sum_{i\in \mathcal{R}_t}\text{Clfdr}^\tau_i}{|\mathcal{R}_t|}\right]=  \mathbb{E}_{\X}\left[\dfrac{\sum_{i \in \mathcal{R}_t}\widehat{\text{Clfdr}}_i}{|\mathcal{R}_t|}\right]+o(1)\leq \alpha+o(1).$$

	\subsection{Proof of theorem 3}

		Note when both $f_t$ and $\pi_t$ are fixed over time, the Clfdr statistic reduces to $\mbox{Lfdr}_i\coloneqq \frac{(1-\pi)f_0(x_i)}{f(x_i)}$. The optimal threshold in the offline simultaneous testing setup would be independent of time $t$ and the chosen neighborhood. The oracle offline rule coincides with the oracle procedure described in Section 3.2 of \cite{SunCai07}. 
		
	We now introduce some notations: 
	\begin{itemize}
		\item $\hat{U}^t(\gamma)=t^{-1}\sum_{i=1}^{t}(\widehat{\mbox{Clfdr}}_{(i)}-\alpha)\mathbb{I}\widehat{\{\mbox{Clfdr}}_{(i)}< \gamma\} $
		\item $U^t(\gamma)=t^{-1}\sum_{i=1}^{t}(\mbox{Clfdr}^\tau_{(i)}-\alpha)\mathbb{I}\{\mbox{Clfdr}^\tau_{(i)}< \gamma\} $.
		\item  $U^t_{\infty}(\gamma) = \mathbb{E}\{(\mbox{Clfdr}^\tau-\alpha)\mathbb{I}\{\mbox{Clfdr}^\tau<\gamma\}\}$.
		\item  $\gamma_\infty=\sup\{\gamma\in(0,1), U^t_{\infty}(\gamma)\leq 0  \} $ is the ``ideal" threshold. 
	\end{itemize} 
	
Note that $\hat{U}^t$ is discrete. To facilitate the theoretical analysis, we define, for $\widehat{\mbox{Clfdr}}_{(i)}< \gamma<\widehat{\mbox{Clfdr}}_{(i+1)}$, a continuous version of $\hat{U}^t$:
	$$\hat{U}^t_C(\gamma)=\dfrac{\gamma-\widehat{\mbox{Clfdr}}_{(i)}}{\widehat{\mbox{Clfdr}}_{(i+1)}-\widehat{\mbox{Clfdr}}_{(i)}}\hat{U}^t_i+\dfrac{\widehat{\mbox{Clfdr}}_{(i+1)}-\gamma}{\widehat{\mbox{Clfdr}}_{(i+1)}-\widehat{\mbox{Clfdr}}_{(i)}}\hat{U}^t_{i+1}, $$
	where $\hat{U}^t_i=\hat{U}^t(\widehat{\mbox{Clfdr}}_{(i)})$. It is easy to verify that $\hat{U}^t_C$ is continuous and monotone. Hence its inverse $\hat{U}^{t,-1}_C$ is well defined, continuous and monotone. \\
	Next we show the following two results in turn: (i) $\hat{U}^t(\gamma)\xrightarrow{p} U^t_\infty(\gamma)$ and (ii)$\hat{U}^{t,-1}_C(0)\xrightarrow{p} \gamma_\infty$. \\

\noindent\textbf{Proof of (i).} Note that $U^t(\gamma)\xrightarrow{p}U^t_\infty(\gamma)$ by the WLLN, so that we only need to establish that $\hat{U}^t(\gamma)\xrightarrow{p}U^t(\gamma)$. We need to following lemma:
	\begin{lemma}\label{lemma2}
		Let $V_i = (\mbox{Clfdr}^\tau_i-\alpha) \mathbb{I}(\mbox{Clfdr}^\tau_i < \gamma)$ and $\hat{V}_i = (\widehat{\mbox{Clfdr}}_i-\alpha)\mathbb{I}\{\widehat{\mbox{Clfdr}}_i < \gamma\}$. Then $\mathbb{E}\left(\hat{V}_i - V_i \right)^2 = o(1)$.
	\end{lemma}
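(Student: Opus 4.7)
The plan is a standard bounded-convergence argument: reduce the $L^2$ statement to convergence in probability, and then invoke Proposition \ref{SAWS_prop} together with a boundary argument at the threshold $\gamma$.

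Step one is uniform boundedness. The truncation in \eqref{Clfdr-hat} forces $\widehat{\mbox{Clfdr}}_i \in [0,1]$, while by definition $\mbox{Clfdr}^\tau_i \in [0,1]$; together with $\alpha \in (0,1)$, this gives $|V_i|, |\hat V_i| \leq 1$, so $(\hat V_i - V_i)^2 \leq 4$ almost surely. Consequently, for any $\epsilon>0$,
$$ \mathbb{E}(\hat V_i - V_i)^2 \leq \epsilon^2 + 4\,\mathbb{P}\bigl(|\hat V_i - V_i| > \epsilon\bigr), $$
so the lemma reduces to showing $\hat V_i - V_i \xrightarrow{p} 0$ and then sending $\epsilon \downarrow 0$.

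Step two decomposes the difference into a \emph{value} piece and an \emph{indicator} piece. Adding and subtracting $(\widehat{\mbox{Clfdr}}_i-\alpha)\mathbb{I}(\mbox{Clfdr}^\tau_i<\gamma)$ and bounding $|\widehat{\mbox{Clfdr}}_i-\alpha|\leq 1$ yields
$$ |\hat V_i - V_i| \leq |\widehat{\mbox{Clfdr}}_i - \mbox{Clfdr}^\tau_i| + \bigl|\mathbb{I}(\widehat{\mbox{Clfdr}}_i<\gamma) - \mathbb{I}(\mbox{Clfdr}^\tau_i<\gamma)\bigr|. $$
The first summand is $o_p(1)$ directly by Proposition \ref{SAWS_prop}. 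For the indicator discrepancy, disagreement of the two indicators forces $\widehat{\mbox{Clfdr}}_i$ and $\mbox{Clfdr}^\tau_i$ to straddle $\gamma$, so $|\mbox{Clfdr}^\tau_i-\gamma|\leq|\widehat{\mbox{Clfdr}}_i-\mbox{Clfdr}^\tau_i|$ on that event; hence for every $\delta>0$,
$$ \mathbb{P}\bigl(\mathbb{I}(\widehat{\mbox{Clfdr}}_i<\gamma)\neq\mathbb{I}(\mbox{Clfdr}^\tau_i<\gamma)\bigr) \leq \mathbb{P}\bigl(|\widehat{\mbox{Clfdr}}_i-\mbox{Clfdr}^\tau_i|>\delta\bigr) + \mathbb{P}\bigl(|\mbox{Clfdr}^\tau_i-\gamma|\leq \delta\bigr), $$
and Proposition \ref{SAWS_prop} again drives the first term to zero.

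The main obstacle is therefore controlling $\mathbb{P}(|\mbox{Clfdr}^\tau_i-\gamma|\leq \delta)$ as $\delta\downarrow 0$, which boils down to showing that the law of $\mbox{Clfdr}^\tau_i$ carries no atom at the fixed $\gamma$. This follows from the continuous distribution of $X_i$ under the mixture model \eqref{hmix-model}: $\mbox{Clfdr}^\tau_i = (1-\pi^\tau_i)f_0(X_i)/f_i(X_i)$ is a smooth, non-constant transformation of $X_i$, so its distribution is absolutely continuous on its support and in particular atomless at any prescribed $\gamma$. Sending $\delta\downarrow 0$ and then $\epsilon\downarrow 0$ in the chain above delivers $\mathbb{E}(\hat V_i-V_i)^2 = o(1)$, completing the plan.
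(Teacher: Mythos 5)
Your proof is correct and follows essentially the same route as the paper's: both arguments reduce the value discrepancy to the consistency of $\widehat{\mbox{Clfdr}}_i$ from Proposition \ref{SAWS_prop} on the event where the two indicators agree, and control the disagreement event by a boundary term $\mathbb{P}(|\mbox{Clfdr}^\tau_i-\gamma|\leq\delta)$ plus an estimation-error term $\mathbb{P}(|\widehat{\mbox{Clfdr}}_i-\mbox{Clfdr}^\tau_i|>\delta)$. The only cosmetic difference is that you pass through convergence in probability and bounded convergence while the paper bounds the three squared terms $I$, $II$, $III$ in expectation directly; your remark that the required atomlessness is a property of the law of $\mbox{Clfdr}^\tau_i$ (rather than of $\widehat{\mbox{Clfdr}}_i$, as the paper's wording suggests) is the more accurate statement of the needed fact.
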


\noindent\textbf{Proof of Lemma \ref{lemma2}}.
Using the definitions of $\hat{V}_i$ and $V_i$, we can show that
\begin{align*}
\left(\hat{V}_i - V_i\right)^2 = & \left(\widehat{\mbox{Clfdr}}_{i} - \mbox{Clfdr}^\tau_{i}\right)^2\mathbb{I} \left(\widehat{\mbox{Clfdr}}_{i} \leq \gamma, \mbox{Clfdr}^\tau_{i} \leq \gamma \right) + \left(\widehat{\mbox{Clfdr}}_{i}-\alpha\right)^2\mathbb{I} \left(\widehat{\mbox{Clfdr}}_{i} \leq \gamma,\mbox{Clfdr}^\tau_{i}> \gamma \right)\\
& + \left(\mbox{Clfdr}^\tau_{i}-\alpha\right)^2\mathbb{I} \left(\widehat{\mbox{Clfdr}}_{i}> \gamma, \mbox{Clfdr}^\tau_{i} \leq \gamma \right).
\end{align*}
Let us refer to the three sums on the right hand as $I$, $II$, and $III$ respectively. By step 2 in the proof of Theorem 2, $I = o(1)$. Then let $\varepsilon > 0$, and consider that
\begin{align*}
\mathbb{P}\left(\widehat{\mbox{Clfdr}}_{i} \leq \gamma, \mbox{Clfdr}^\tau_{i}> \gamma \right) &\leq \mathbb{P}\left(\widehat{\mbox{Clfdr}}_{i} \leq \gamma, \mbox{Clfdr}^\tau_{i}\in \left(\gamma, \gamma+ \varepsilon \right) \right)+\mathbb{P}\left(\widehat{\mbox{Clfdr}}_{i} \leq t, \mbox{Clfdr}^\tau_{i}\geq  \gamma+ \varepsilon  \right) \\
&\leq \mathbb{P}\left\{\mbox{Clfdr}^\tau_{i} \in \left(\gamma, \gamma+ \varepsilon \right)\right\} + \mathbb{P}\left(\left|\mbox{Clfdr}^\tau_{i} - \widehat{\mbox{Clfdr}}_{i}\right| > \varepsilon \right)
\end{align*}
The first term on the right hand is vanishingly small as $\varepsilon \rightarrow 0$ because $\widehat{\mbox{Clfdr}}_{i}$ is a continuous random variable. The second term converges to $0$ by Proposition \ref{SAWS_prop}. Noting that $0\leq\widehat{\mbox{Clfdr}}_{i} \leq 1$, we conclude $II = o(1)$. In a similar fashion, we can show that $III = o(1)$, thus proving the lemma.

Let $S_t=\sum_{i=1}^{t}(\hat{V}_i-V_i)$, by Lemma \ref{lemma2} and the Cauchy-Schwartz inequality, 
	$$\mathbb{E}\left\{\left(\hat{V}_i-V_i\right)\left(\hat{V}_j-V_j\right)\right\} = o(1).$$
	%
	%
	It follows that 
	\begin{align*}
	Var\left( t^{-1} S_t   \right) = &  t^{-2} Var(S_t) \leq t^{-2}\sum_{i=1}^{t} \mathbb{E}\left\{ \left( \hat{V}_i - V_i\right)^2 \right\} \\
	& +O\left(\frac{1}{t^2}\sum_{i,j:i\neq j} \mathbb{E}\left\{\left(\hat{V}_i-V_i\right)\left(\hat{V}_j-V_j\right)\right\}\right)\\
	=& o(1).
	\end{align*}
	By Proposition \ref{SAWS_prop}, $\mathbb{E}\left( t^{-1} S_t   \right)\rightarrow 0$, applying Chebyshev's inequality, we obtain 
$$t^{-1} S_t =\hat{U}^t-U^t\xrightarrow{p}0,$$  establishing (i).

\noindent\textbf{Proof of (ii).} Since $\hat{U}^t_C$ is continuous, for any $\epsilon >0$, we can find $\eta>0$ such that 
	$\left|\hat{U}_{C}^{t, -1}(0) - \hat{U}_{C}^{t, -1}\left\{\hat{U}_{C}^{t}\left(\gamma_{\infty}^{}\right)\right\}\right| < \varepsilon$ 
	if 
	$\left|\hat{U}_{C}^{t}\left(\gamma_{\infty}^{}\right) \right|< \eta$. It follows that
	\[
	\mathbb{P}\left\{ \left|\hat{U}_{C}^{\tau}\left(\gamma_{\infty}^{}\right) \right|> \eta\right\} \geq \mathbb{P}\left\{\left|\hat{U}_{C}^{t, -1}(0) - \hat{U}_{C}^{t, -1}\left\{\hat{U}_{C}^{t}\left(\gamma_{\infty}^{}\right)\right\}\right| > \varepsilon\right\}.
	\]
	Proposition \ref{SAWS_prop} and the WLLN imply that $\hat{U}_{C}^{t}(\gamma) \overset{p}\rightarrow U_{\infty}^{t}(\gamma).$ Note that $U_{\infty}^{t}\left(\gamma_{\infty}^{}\right) = 0$, then,
	\[
	\mathbb{P}\left(\left|\hat{U}_{C}^{t}\left(\gamma_{\infty}^{}\right) \right|>\eta\right) \rightarrow 0.
	\]
	Hence, we have
	\beq\label{eq3}
	\hat{U}_{C}^{t, -1}(0) \overset{p}\rightarrow \hat{U}_{C}^{t, -1} \left\{\hat{U}_{C}^{t}\left(\gamma_{\infty}^{}\right)\right\} = \gamma_{\infty}^{},
	\eeq
	completing the proof of (ii).

\section{Proof of propositions}

\subsection{Proof of Proposition \ref{prop1}}

Let $\mathcal{A}_\tau=\{x: P_0(x)>\tau\}$, where $P_0(x)$ is the p-value of $x$. Then
\begin{align*}
	(1-\tau)^{-1}{\pr(P_t>\tau)}&=	(1-\tau)^{-1}\int_{\mathcal{A}_\tau}f_0(x)(1-\pi_t)+\pi_{t}f_{1t}(x)dx\\
	&\geq (1-\tau)^{-1}\int_{\mathcal{A}_\tau}f_0(x)(1-\pi_t)dx\\
	&=(1-\pi_t).
\end{align*}
	Hence 
	$\pi^\tau_t = 1 - (1-\tau)^{-1}{\pr(P_t>\tau)}\leq 1-(1-\pi_{t})=\pi_t.$ By definition of $\mbox{Clfdr}_t$, we have $\mbox{Clfdr}_t^{\tau} \geq \mbox{Clfdr}_t $.

Let  $\pmb{\delta}^{\tau}_{OR}$ be the decision rule described in Algorithm 1 with $ \mbox{Clfdr}_t^{\tau} $ used in place of $ \mbox{Clfdr}_t  $.
	Let $\mathcal{R}$ be the index set of hypotheses rejected by $\pmb{\delta}^{\tau}_{OR}$. The FDR of $\pmb{\delta}^{\tau}_{OR}$ is
	\begin{align*}
	\mbox{FDR}(\pmb{\delta}^{\tau}_{OR})&=\mathbb{E}\left\{\dfrac{\sum_{i \leq \mathcal{R}}(1-\theta_i)}{|\mathcal{R}\vee 1|}\right\}\\
	&=\mathbb{E}_{\pmb{X}}\left[ \mathbb{E}\left\{      \dfrac{\sum_{i \leq \mathcal{R}}(1-\theta_i)}{|\mathcal{R}\vee 1|} \bigg|\X    \right\}           \right]\\
	&=\mathbb{E}_{\X}\left(  \dfrac{1}{|\mathcal{R}\vee 1|} \sum_{i \in \mathcal{R}}\mbox{Clfdr}_i     \right).
	\end{align*}
	Since $\mbox{Clfdr}_t^{\tau} \geq \mbox{Clfdr}_t $, it follows that
	$$	\mbox{FDR}(\pmb{\delta}^{\tau}_{OR})\leq  \mathbb{E}_{\X}\left(  \dfrac{1}{|\mathcal{R}\vee 1|} \sum_{i \in \mathcal{R}}\mbox{Clfdr}_i^{\tau}     \right)\leq \alpha.$$
	The last inequality is due to the definition of $\pmb{\delta}^{\tau}_{OR}$ which guarantees that $$  \dfrac{1}{|\mathcal{R}\vee 1|} \sum_{i \in \mathcal{R}}\mbox{Clfdr}_i^{\tau}  \leq \alpha.$$ 
	
	\subsection{Proof of Proposition \ref{SAWS_prop}}
	Under the in-fill model, we write
	$$	\hat{f}_t(x) = \frac{\sum_{j=t-d+1}^{t-1} K_{h_t}\left({1-j/t}\right) K_{h_x}\left({x_j-x}\right)}{ \sum_{j=t-d+1}^{t-1} K_{h_t}\left({1-j/t}\right)}. $$
We first state 3 lemmas that will be proved in turn.
	
	\begin{lemma}\label{lem2}
		Under the assumption of Proposition \ref{SAWS_prop}, $\mathbb{E}\int\left\{\hat{f}_t(x)-f_t(x )\right\}^2dx\rightarrow 0 .$
	\end{lemma}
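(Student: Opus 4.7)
The plan is to carry out the classical mean integrated squared error (MISE) decomposition for a kernel density estimator and verify that each piece vanishes under (A1)--(A4). The key structural observation is that the denominator $D_t := \sum_{j=t-d+1}^{t-1} K_{h_t}(1-j/t)$ is deterministic, so $\hat f_t(x) = \sum_j w_{t,j}\, K_{h_x}(x_j-x)$ with nonrandom weights $w_{t,j} = K_{h_t}(1-j/t)/D_t$ summing to one, and the data $\{x_j\}$ enter only through the spatial kernel. A Riemann-sum argument together with (A4) yields $D_t \asymp t$ and, for $j$ in the temporal kernel's support, $w_{t,j} = O((t h_t)^{-1})$ with $O(t h_t)$ nonzero weights, so $\sum_j w_{t,j}^2 = O((t h_t)^{-1})$.

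I would then set up the decomposition
\[
\mathbb{E}\int \{\hat f_t(x) - f_t(x)\}^2 dx = \int \mathrm{Var}\{\hat f_t(x)\}\, dx + \int \{\mathbb{E}\hat f_t(x) - f_t(x)\}^2 dx,
\]
and bound each piece separately. For the variance, independence of $\{x_j\}$ gives $\mathrm{Var}\{\hat f_t(x)\} = \sum_j w_{t,j}^2 \mathrm{Var}\{K_{h_x}(x_j-x)\}$, and a change of variables with (A3) supplies $\int \mathrm{Var}\{K_{h_x}(x_j-x)\}\, dx \leq h_x^{-1}\int K^2(u)\, du$. Combining yields integrated variance $O(1/(t h_t h_x)) \to 0$ by (A2).

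For the squared bias, I would insert and subtract $g_{j/t}(x) := (K_{h_x} * f_{j/t})(x)$ and write
\[
\mathbb{E}\hat f_t(x) - f_t(x) = \underbrace{\sum_j w_{t,j}\{g_{j/t}(x) - f_{j/t}(x)\}}_{(I)} + \underbrace{\sum_j w_{t,j}\{f_{j/t}(x) - f_t(x)\}}_{(II)}.
\]
Jensen's inequality applied to the probability weights $\{w_{t,j}\}$ yields $\int (I)^2 dx \leq \sum_j w_{t,j} \int \{g_{j/t}-f_{j/t}\}^2 dx$, and a standard Taylor expansion of the mollifier combined with $\|f_s\|_\infty < C$ and $\int |f_s''|\, dx < C$ from (A3) gives $\int (g_s - f_s)^2 dx = O(h_x^2)$ uniformly in $s$; hence $(I)$ contributes $o(1)$. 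The same Jensen step reduces $(II)$ to a weighted average of $\int (f_{j/t}-f_t)^2 dx$. Since the temporal kernel concentrates all its mass on $\{j : |j/t - 1| \lesssim h_t\}$ and (A1) guarantees $\int |f_{j/t}-f_t|\, dx < \epsilon$ for all such $j$ once $h_t$ is small, the $L^\infty$ bound in (A3) upgrades this to $\int (f_{j/t}-f_t)^2 dx \leq 2C\epsilon$, so $\int (II)^2 dx \to 0$.

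The main obstacle I anticipate is the dual role played by the temporal bandwidth $h_t$, which simultaneously controls the effective temporal sample size driving the variance and the magnitude of the temporal bias arising through (A1); these are reconciled precisely by the joint conditions (A2) and (A4). A secondary subtlety is converting the naturally $L^1$ statement in (A1) into the $L^2$ control required by the MISE, which is handled by the uniform $L^\infty$ bound in (A3); beyond this, the argument follows classical KDE analysis.
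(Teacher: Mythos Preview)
Your proposal is correct and follows essentially the same approach as the paper: the MISE bias--variance decomposition, with the variance handled via independence and the weight bound $\sum_j w_{t,j}^2 = O((th_t)^{-1})$ giving integrated variance $O((th_th_x)^{-1})$, and the bias split into a spatial smoothing term controlled by Taylor expansion and (A3) plus a temporal drift term controlled by (A1), upgraded from $L^1$ to $L^2$ via the uniform density bound. The only cosmetic difference is that the paper bounds $\int|\mathbb{E}\hat f_t - f_t|\,dx$ first and invokes boundedness at the end, whereas you organize the bias via Jensen's inequality on the convex weights; the mathematical content is the same.
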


		\begin{lemma}\label{lem1}
			Under the assumptions of Proposition \ref{SAWS_prop},
		\[
		\ep\|\hat{\pi}^{\tau}_t - \pi^\tau\|^2 = \ep\int\{\hat{\pi}^{\tau}_t(x) - \pi^{\tau}(x)\}^2dx \rightarrow 0.
		\]

	\end{lemma}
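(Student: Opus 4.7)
The plan is to establish the lemma via a standard bias--variance decomposition of the kernel-smoothed proportion estimator at each time point and then pass to the integrated statement by bounded convergence. For each evaluation time $x\in(0,1]$ set $w_i^x := K_{h_t}(x-i/t)$ and write $q_s := \mathbb{P}(P_s>\tau) = (1-\pi^\tau(s))(1-\tau)$; then viewing $\hat{\pi}^\tau_t(x)$ as the natural pointwise version of \eqref{pi-tau2} evaluated at $x$, a direct computation gives
\begin{equation*}
\hat{\pi}^\tau_t(x) - \pi^\tau(x) \;=\; B_t(x) + V_t(x),
\end{equation*}
where $B_t(x) = (1-\tau)^{-1}\bigl[q_x - \sum_i w_i^x q_{i/t}/\sum_i w_i^x\bigr]$ is a deterministic bias and $V_t(x) = -(1-\tau)^{-1}\sum_i w_i^x[\mathbb{I}(P_i>\tau) - q_{i/t}]/\sum_i w_i^x$ is a mean-zero stochastic term.

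I would first control the bias. Assumption (A1) says $s\mapsto f_s$ is $L^1$-continuous, so $q_s$ (a bounded linear functional of $f_s$) is continuous in $s$; combined with $h_t\to 0$ from (A2), the standard kernel-localization argument gives $|B_t(x)|=o(1)$ pointwise on the interior. Next, independence of $\{\mathbb{I}(P_i>\tau)\}_i$ and the bound $\mathrm{Var}(\mathbb{I}(P_i>\tau))\leq 1/4$ yield
\begin{equation*}
\mathbb{E}|V_t(x)|^2 \;\leq\; \frac{1}{4(1-\tau)^2}\cdot\frac{\sum_i (w_i^x)^2}{(\sum_i w_i^x)^2} \;=\; O\!\left(\frac{1}{d h_t}\right),
\end{equation*}
by a standard kernel-sum estimate, which is $o(1)$ under (A4). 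Together these give $\mathbb{E}|\hat\pi^\tau_t(x)-\pi^\tau(x)|^2\to 0$ for each interior $x$; since $|\hat\pi^\tau_t - \pi^\tau|\leq 2$ is uniformly bounded on $(0,1]$, Fubini and bounded convergence yield $\mathbb{E}\int\{\hat\pi^\tau_t(x)-\pi^\tau(x)\}^2 dx \to 0$.

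The main obstacle I anticipate is the one-sided nature of the window (only $i$ with $t-d+1\leq i\leq t-1$ is used), which produces a standard boundary bias and can shrink the effective kernel mass near the right endpoint $x=1$. Assumption (A4) --- $dh_t\to\infty$ and $d\geq c t h_t$ --- is precisely what prevents $\sum_i w_i^x$ from collapsing there and underwrites the kernel-sum rates quoted above; a shrinking neighborhood near $x=0$ contributes $o(1)$ to the integral by boundedness alone, so no extra care is needed at the left endpoint. A secondary technical point is verifying the $L^1$-continuity of $s\mapsto q_s$ from (A1), which follows because $q_s = \int \mathbb{I}(f_0(y)/f_s(y) < \tau\text{-threshold})\,f_s(y)\,dy$ depends continuously on $f_s$ in total variation.
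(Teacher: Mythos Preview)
Your bias--variance decomposition is correct and is more direct than the paper's argument. The paper proves Lemma~\ref{lem1} by a detour through the density result: it invokes Lemma~\ref{lem2} on the set $\{x:P_0(x)>\tau\}$ to control the kernel-smoothed integral, and then separately proves the auxiliary statement \eqref{eq2}, $\mathbb E\!\int_{P(x)>\tau}K_{h_x}(x_j-x)\,dx\to\mathbb E\{\mathbb I(P(x_j)>\tau)\}$, to pass from the density form back to the indicator form of $\hat\pi^\tau_t$; the variance is then handled by Chebyshev at rate $O\{(th_t)^{-1}\}$. Your route bypasses both Lemma~\ref{lem2} and \eqref{eq2} by working directly with the Nadaraya--Watson ratio $\sum_i w_i^x\mathbb I(P_i>\tau)/\sum_i w_i^x$: continuity of $s\mapsto q_s$ under (A1) gives the bias, and independence gives the variance. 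This is the textbook argument and is what one would expect; the paper's version gains nothing except reuse of Lemma~\ref{lem2}.

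Three small corrections. First, the variance rate is $O\{(th_t)^{-1}\}$, not $O\{(dh_t)^{-1}\}$: in the infill model the $d-1$ design points are at spacing $1/t$, so Riemann approximation gives $\sum_i w_i^x\asymp t$ and $\sum_i (w_i^x)^2\asymp t/h_t$. This is immaterial since $d\le t$ and (A4) force $th_t\ge dh_t\to\infty$. Second, the set $A_\tau=\{y:P_0(y)>\tau\}$ depends only on $f_0$, not on $f_s$, so your continuity claim is simply $|q_s-q_{s'}|\le\int|f_s-f_{s'}|$; your display involving $f_0/f_s$ is a slip but the conclusion stands. Third, in the paper $\hat\pi^\tau_t$ is actually a scalar (the estimate at the current time, i.e.\ your $x=1$), and the integral notation in the lemma statement is a cosmetic device to align with Lemma~\ref{lem3}; your argument specialized to the boundary point $x=1$, where you already note (A4) supplies the needed kernel mass, is precisely what is required.
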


		\begin{lemma}\label{lem3}
		Let $\hat{\pi}^\tau_t$, $\hat{f}_t(x)$, and $\hat{f_0}$ be estimates such that $\mathbb{E}\|\hat{\pi}^\tau_t-\pi_t^\tau\|^2\rightarrow 0$, $\mathbb{E}\|\hat{f}_t(x)-f_t(x)\|^2\rightarrow 0$, $\mathbb{E}\|\hat{f}_0-f_0\|^2\rightarrow 0$, and then $\mathbb{E}\|\widehat{\mbox{Clfdr}}_t-\mbox{Clfdr}^\tau_t\|^2\rightarrow 0$.
	\end{lemma}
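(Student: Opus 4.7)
The plan is to reduce Lemma \ref{lem3} to the hypothesised $L^2$ convergence of $\hat\pi^\tau_t$, $\hat f_t$, and $\hat f_0$ by a purely algebraic decomposition of the ratio, followed by a careful handling of the denominator.

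First, since the clipping map $y\mapsto\min\{y,1\}$ is $1$-Lipschitz, it suffices to control the unclipped estimator $R_t(x)=(1-\hat\pi^\tau_t)\hat f_0(x)/\hat f_t(x)$ against $\text{Clfdr}^\tau_t(x)=(1-\pi^\tau_t)f_0(x)/f_t(x)$. The natural decomposition is
\begin{equation*}
R_t-\text{Clfdr}^\tau_t \;=\; \frac{(1-\hat\pi^\tau_t)\hat f_0-(1-\pi^\tau_t)f_0}{\hat f_t} \;+\; \frac{(1-\pi^\tau_t)f_0\,(f_t-\hat f_t)}{\hat f_t\,f_t},
\end{equation*}
and the first numerator splits further, by adding and subtracting $(1-\hat\pi^\tau_t)f_0$, into $(1-\hat\pi^\tau_t)(\hat f_0-f_0)+(\pi^\tau_t-\hat\pi^\tau_t)f_0$. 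Each piece there vanishes in $L^2$ by the assumptions $\mathbb{E}\|\hat f_0-f_0\|^2\to 0$ and $\mathbb{E}\|\hat\pi^\tau_t-\pi^\tau_t\|^2\to 0$, together with the bounds $0\le 1-\hat\pi^\tau_t\le 1$ and $f_0\le C$ from (A3), after applying Cauchy--Schwarz.

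The main obstacle is the second summand, which carries the denominator $\hat f_t f_t$ and can be arbitrarily large on sets where either density is close to zero. To tame it I would exploit the clipping at $1$ in $\widehat{\text{Clfdr}}_t$: on the event $\{R_t\le 1\}$ one has $(1-\hat\pi^\tau_t)\hat f_0/\hat f_t\le 1$, so the problematic factor reduces to $|f_t-\hat f_t|/f_t$ times a bounded quantity; on the complementary event the clipped estimator equals $1$, and one bounds $|1-\text{Clfdr}^\tau_t|$ by the first (already controlled) decomposition together with the uniform upper bound $\text{Clfdr}^\tau_t\le(1-\pi^\tau_t)/(1-\pi_t)$. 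To make the $L^2$ integral in $x$ finite I would split the support into $\{f_t\ge\eta\}$ and $\{f_t<\eta\}$: on the good set the factor $1/f_t$ is bounded by $\eta^{-1}$ and the contribution is $O(\eta^{-2}\,\|\hat f_t-f_t\|^2)$, while on the bad set the bounded clipped integrand contributes an amount controlled by $\text{Leb}\{f_t<\eta\}\to 0$ as $\eta\downarrow 0$. Taking $\eta$ to zero slowly enough that $\eta^{-2}\cdot\mathbb{E}\|\hat f_t-f_t\|^2\to 0$ delivers the bound.

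Collecting the pieces, the triangle inequality and the three hypothesised $L^2$ rates give $\mathbb{E}\|\widehat{\text{Clfdr}}_t-\text{Clfdr}^\tau_t\|^2=o(1)$. The genuinely delicate step is the small-denominator control in the second summand; once the clipping at $1$ is used to absorb the diverging ratio, the rest is a standard application of Cauchy--Schwarz, the triangle inequality, and boundedness of $f_0$, $1-\pi^\tau_t$, and $1-\hat\pi^\tau_t$.
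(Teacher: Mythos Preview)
Your algebraic decomposition of the ratio and the handling of the numerator pieces are correct and essentially match the paper's identity. The gap is in the small-denominator region. You assert that the contribution from $\{f_t<\eta\}$ is controlled because $\text{Leb}\{f_t<\eta\}\to 0$ as $\eta\downarrow 0$, but this is false: under the model here $f_t$ is a strictly positive continuous density on all of $\mathbb{R}$ (a normal mixture), so $\{x:f_t(x)<\eta\}$ contains the tails and has \emph{infinite} Lebesgue measure for every $\eta<\sup_x f_t(x)$. Your $\{R_t\le 1\}$ device does not rescue this either: $R_t\le 1$ yields $(1-\hat\pi^\tau_t)\hat f_0\le\hat f_t$, which bounds $\hat f_0/\hat f_t$, but the second summand in your decomposition carries $f_0/\hat f_t$, and the discrepancy $|f_0-\hat f_0|/\hat f_t$ still has the uncontrolled factor $1/\hat f_t$.

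The paper's route is instead to first restrict to a compact window $K_1=[-M,M]$ on which, by continuity and strict positivity, $f_t\ge l_0>0$; the complement is handled by $\mathbb{P}(X_t\in K_1^c)\to 0$ as $M\to\infty$ (a probability, not a Lebesgue measure). On $K_1$ the assumed $L^2$ convergence of $\hat f_t$ forces, via Markov's inequality, the set $A_\epsilon^{f_t}=\{|\hat f_t-f_t|\ge l_0/2\}$ to have vanishing measure; off this exceptional set $\hat f_t\ge l_0/2$ as well, and similarly for $\hat f_0$. One then has all of $f_0,\hat f_0,f_t,\hat f_t$ trapped in a fixed interval $[l_a,l_b]\subset(0,\infty)$ on the good set, so the ratio difference is bounded by a constant times the three $L^2$ errors; the bad set is handled using only the uniform boundedness of the clipped Clfdr difference. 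The clipping case split you propose becomes unnecessary once a genuine lower bound on $\hat f_t$ is secured this way.
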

	By Lemma \ref{lem2} and Lemma \ref{lem1}, together with the fact that $f_0$ is known, it follows from Lemma \ref{lem3} that 
	$\mathbb{E}\|\widehat{\text{Clfdr}}_t-\text{Clfdr}^\tau_t\|^2\rightarrow 0.$ Since convergence in second order mean implies convergence in probability, we have $$\widehat{\text{Clfdr}}_t^\tau\xrightarrow{p} \text{Clfdr}_t^\tau.$$

	\subsection{Growing domain version of Proposition \ref{SAWS_prop}}\label{grow}
	
	In the growing domain framework, Proposition \ref{SAWS_prop} takes the following form:
	
	\begin{proposition}\label{SAWS}
		Suppose:
		
		\textbf{(A1')}: For any $\epsilon>0$, $\exists T$ such that for all integers $i,j$ on the interval $\left[\floor*{ t-\sqrt{\log(t)}h_t},t\right]$,$t>T$, we have       $\int|f_i(x)-f_j(x)|dx<\epsilon$.
		
		\textbf{(A2')}: $h_x\rightarrow 0$, $h_t\leq t$ and $h_xh_t\rightarrow \infty$.
		
		
		\textbf{(A3')}: $f_j(x)<C$ and $\int |f''_j(x)|dx<C $  for all $j$.
		
		\textbf{(A4')}: $d\geq ch_t$ for some $c>0$.
		
	We have  $\widehat{\text{Clfdr}}_t\xrightarrow{p} \text{Clfdr}_t^\tau.$
	\end{proposition}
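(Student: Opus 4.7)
The plan is to reuse the three-step decomposition behind Proposition \ref{SAWS_prop} verbatim: (i) show $\mathbb{E}\int\{\hat f_t(x)-f_t(x)\}^2\,dx \to 0$; (ii) show $\mathbb{E}\|\hat\pi^\tau_t-\pi^\tau_t\|^2 \to 0$; and (iii) invoke Lemma \ref{lem3}, which is a purely analytic statement about the ratio $(1-\pi)f_0/f$ and carries over to the growing-domain setting without modification. The work is therefore to re-derive the analogues of Lemmas \ref{lem2} and \ref{lem1} under the growing-domain conditions (A1')--(A4').

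For (i), I would perform the usual bias--variance decomposition of $\hat f_t(x)-f_t(x)$. Writing $w_j = K_{h_t}(t-j)$, the bias at $x$ is a weighted average, across $j\in\{t-d+1,\ldots,t-1\}$, of $\mathbb{E}\,K_{h_x}(X_j-x)-f_t(x)$. I would split the indices into a ``core'' band $|t-j|\leq \sqrt{\log t}\,h_t$ and its complement. On the core band, (A1') gives $\int|f_j(x)-f_t(x)|\,dx<\epsilon$ for $t$ large, while the standard kernel-smoothing bias in $x$ is $O(h_x^2)$ by (A3'); the tail indices contribute kernel mass $\int_{|u|>\sqrt{\log t}}K(u)\,du$, which is negligible for any kernel with a sufficient moment. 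Condition (A4') ensures $\sum_j w_j \asymp h_t$, so the weighted average is well-normalized. For the variance, a direct calculation gives $\mathrm{Var}\,\hat f_t(x) = O(1/(h_t h_x))$, which vanishes by (A2'). Integrating in $x$ against a dominating function provided by (A3') then delivers the mean-square $L^2(x)$ statement.

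For (ii), the argument is structurally identical but one-dimensional: $\mathbb{E}\,\hat\pi^\tau_t$ is a weighted average of $1-(1-\tau)^{-1}\mathbb{P}(P_j>\tau)$, and by (A1') together with smoothness of $f_0$ and $f_j$ (so that $\mathbb{P}(P_j>\tau)$ inherits the local $L^1$-closeness), these probabilities agree with $\pi^\tau_t$ up to $o(1)$ uniformly on the core band; the variance is $O(1/h_t)$ and vanishes by (A2') and (A4'). Combining (i) and (ii) with Lemma \ref{lem3} yields $\widehat{\mbox{Clfdr}}_t \xrightarrow{p} \mbox{Clfdr}^\tau_t$.

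The main obstacle I anticipate is the reconciliation of the growing-domain rate conditions with their in-fill counterparts. In the in-fill proof, the effective sample size in a bandwidth window of width $h_t$ on $(0,1]$ is $\asymp t h_t$, which is why (A2) requires $t h_t h_x \to \infty$; in the growing-domain setup the integer grid has unit spacing, so the effective sample size in a window of width $h_t$ is $\asymp h_t$ itself, which is exactly why (A2') only asks for $h_t h_x \to \infty$. Once this bookkeeping is in place and the $\sqrt{\log t}$ tail truncation enabled by (A1') is handled, the remainder of the argument is a routine adaptation of the proofs of Lemmas \ref{lem2} and \ref{lem1}.
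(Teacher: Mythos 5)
Your proposal is correct and takes essentially the same approach as the paper: the paper omits this proof entirely, stating only that it ``follows the same line as the proof of Proposition \ref{SAWS_prop},'' and your plan is exactly that adaptation --- re-deriving the analogues of Lemmas \ref{lem2} and \ref{lem1} under (A1')--(A4') and combining them via Lemma \ref{lem3}. Your bookkeeping for why the effective sample size in a bandwidth window changes from $\asymp t h_t$ to $\asymp h_t$ (so that $h_x h_t \to \infty$ replaces $t h_x h_t \to \infty$) is the correct reconciliation of the two sets of rate conditions.
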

	
	The proof follows the same line as the proof of proposition \ref{SAWS_prop}, thus omitted.

\subsection{Proof of Lemma \ref{lem2}}

We first compute $\mathbb{E}\hat{f}_t(x)-f_t(x)$. Note that 
$
\mathbb{E}\khx(X_j-x)=\int K(z)f_j(x-h_xz)dz.	
$
Using Taylor expansion, we have
$$f_j(x-h_xz)=f_j(x)-h_xzf^{'}_j(x)+\frac{1}{2}h_x^2z^2f^{''}_j(x)+o(h_x^2). $$
It follows that
$$ \mathbb{E}\khx(X_j-x)-f_t(x)=f_j(x)-f_t(x)+\frac{1}{2}h_x^2f^{''}_j(x)\int z^2K(z)dz+o(h_x^2). $$
Let
$A=\sum_{j=t-d+1}^{t-1}\left\{\kht(1-j/t)\frac{1}{2}h_x^2f^{''}_j(x)\int z^2K(z)dz+\kht(1-j/t)o(h_x^2)\right\}. $
Then 
$$\mathbb{E}\hat{f}_t(x)-f_t(x)=\dfrac{\sum_{j=t-d+1}^{t-1}\kht(1-\frac{j}{t})\{f_j(x)-f_t(x)\}+A} {\sum_{j=t-d+1}^{t-1}\kht(1-\frac{j}{t})}. $$
$$\int |\mathbb{E}\hat{f}_t(x)-f_t(x) | dx= O\left\{      \dfrac{\sum_{j=t-d+1}^{t-1}\kht(1-j/t)\int |f_j(x)-f_t(x)|dx}{\sum_{j=t-d+1}^{t-1}\kht(1-j/t)}           \right\}+o(h_x^2)\rightarrow 0.$$
To see why the last expression goes to 0, note that for any $\epsilon>0,$ by Assumption (A1), we can take $\delta$ such that for all $i>(1-\delta)t$, $\int|f_i(x)-f_t(x)|dx<\epsilon$. Hence,
\begin{align*}
&\ \ \   \dfrac{\sum_{j=t-d+1}^{t-1}\kht(1-j/t)\int |f_j(x)-f_t(x)|dx}{\sum_{j=t-d+1}^{t-1}\kht(1-j/t)}\\
&=\dfrac{\sum_{j=t-d+1}^{\floor*{(1-\delta)t}}\kht(1-j/t)\int |f_j(x)-f_t(x)|dx}{\sum_{j=t-d+1}^{t-1}\kht(1-j/t)}+  \dfrac{\sum_{j=\floor*{(1-\delta)t}+1}^{t-1}\kht(1-j/t)\int |f_j(x)-f_t(x)|dx}{\sum_{j=t-d+1}^{t-1}\kht(1-j/t)}\\
&\leq  \dfrac{\sum_{j=t-d+1}^{\floor*{(1-\delta)t}}\kht(1-j/t)}{\sum_{j=t-d+1}^{t-1}\kht(1-j/t)}+\epsilon.
\end{align*}
Note that $h_{t}\rightarrow 0$, we conclude that $\sum_{j=t-d+1}^{\floor*{(1-\delta)t}}\kht(1-j/t)=O\left\{\int_{\delta}^{1}K_{h_t}(x)dx \right\}\rightarrow 0.$
Also, since $dh_t\rightarrow \infty$ as $t\rightarrow\infty$, we have $\sum_{j=t-d+1}^{t-1}\kht(1-j/t)\geq c'h^{-1}_t$ for some $c'$.\\
Thus $ \dfrac{\sum_{j=t-d+1}^{\floor*{(1-\delta)t}}\kht(1-j/t)}{\sum_{j=t-d+1}^{t-1}\kht(1-j/t)}\rightarrow 0$, and
$$\lim\limits_{t\rightarrow \infty}\dfrac{\sum_{j=t-d+1}^{t-1}\kht(1-j/t)\int |f_j(x)-f_t(x)|dx}{\sum_{j=t-d+1}^{t-1}\kht(1-j/t)}=0.  $$ 
It follows from the boundedness of $\mathbb{E}\hat{f}_t$ and $f_t(x)$ that 
\beq\label{eqbias}
\int |\mathbb{E}\hat{f}_t(x)-f_t(x) |^2 dx\rightarrow 0.
\eeq
Next we compute $\Var\left\{\hat{f}_t(x)\right\}$:
\begin{align*}
\Var \left\{\khx(X_j-x)  \right\}&=\frac{1}{h_x}\int K(z)^2f_j(x-h_xz)dz-\{f_j(x)+o(1)\}^2\\
&=\frac{1}{h_x}\int K(z)^2(f_j(x)+o(1))dz-\{f_j(x)+o(1)\}^2\\
&=O\left\{\frac{1}{h_x}\int K(z)^2dzf_j(x)\right\}+o(h_x^{-1}).
\end{align*}
Some additional calculations give 
\begin{align*}
\Var \hat{f}_t(x)&=\dfrac{\sum_{j=t-d+1}^{t-1}\{\kht(1-j/t)\}^2O\left\{\frac{1}{h_x}\int K(z)^2dzf_j(x)\right\}} {\{\sum_{j=t-d+1}^{t-1}\kht(1-j/t)\}^2}\\
&=O\left[(th_x)^{-1}\dfrac{\int_{0}^{1}K^2_{h_t}(x)dx}{\left\{\int_{0}^{d/t}K_{h_t}(x)dx\right\}^2}            \right]\\
&=O\left[(th_xh_{t})^{-1}/  \left\{\int_{0}^{d/t}K_{h_t}(x)dx\right\}^2 \right]
\end{align*}
Therefore, by assumption (A3) and (A4), 
\beq\label{eqvar}
\int \Var\hat{f}_t(x)dx=O\left\{ (th_{t}h_x)^{-1}  \right\}\rightarrow 0.
\eeq
Since 
$\mathbb{E}\int \{\hat{f}_t(x)-f_t(x)\}^2=\int \{\mathbb{E}\hat{f}_t(x)-f_t(x)\}^2+\Var\{\hat{f}_t(x)\} dx,$
(\ref{eqbias}) and (\ref{eqvar}) together imply that $\mathbb{E}\int \{\hat{f}_t(x)-f_t(x)\}^2\rightarrow 0$.

\subsection{Proof of lemma \ref{lem1}}
Define
$\hat{\mathbb{P}}(P_t>\tau):= \frac{\sum_{i\in \mathcal T_t(\tau)}K_{h_t}\left(1-i/t\right) }{\sum_{i=t-d+1}^{t-1} K_{h_t}\left(1-i/t\right)}.$
Let $P_t$ be the p-value of $X_t$. 
We will show
\begin{equation}\label{eq1}
\ep\|\hat{\mathbb{P}}(P_t>\tau) -\mathbb{P}(P_t>\tau)\|^2  \rightarrow 0.
\end{equation}
We first rewrite the term 
$$ \frac{\sum_{i\in \mathcal T_t(\tau)} K_{h_t}\left(1-i/t\right) }{\sum_{i=t-d+1}^{t-1} K_{h_t}\left(1-i/t\right)}=\frac{\sum_{i\in \mathcal T_t(\tau)}K_{h_t}\left(1-i/t\right)\mathbb{I}(P(x_i)>\tau) }{\sum_{i=t-d+1}^{t-1} K_{h_t}\left(1-i/t\right)}.$$
By Lemma \ref{lem2} we have,
$\mathbb{E}\int_{\mathcal{B}}\hat{f}_t(x)-f_t(x)dx \rightarrow 0$
for every $\mathcal{B}$. In particular, take $\mathcal{B}=\{x:P(x)>\tau\}$ use the definition of $\hat{f}_t$
we have
\beq\label{eq5}
\frac{\sum_{j=t-d+1}^{t-1} K_{h_t}\left({1-j/t}\right)\mathbb{E}\int_{P(x)>\tau} K_{h_x}\left({x_j-x}\right)dx}{ \sum_{j=t-d+1}^{t-1} K_{h_t}\left({1-j/t}\right)} \rightarrow \mathbb{P}(P_t>\tau).
\eeq
To show the lemma, it is sufficient to show
\begin{equation}\label{eq2}
\mathbb{E}\int_{P(x)>\tau} K_{h_x}\left({x_j-x}\right)dx\rightarrow	\mathbb{E}\left\{\mathbb{I}(P(x_j)>\tau)\right\}.
\end{equation}
To see why (\ref{eq2}) implies (\ref{eq1}), note that (\ref{eq2}) implies 
\beq\label{eq4}
\frac{\sum_{i=t-d+1}^{t-1}\left({1-j/t}\right)\mathbb{E}\int_{P(x)>\tau} K_{h_x}\left({x_j-x}\right)dx}{ \sum_{i=t-d+1}^{t-1} K_{h_t}\left({1-j/t}\right)}\rightarrow \frac{\sum_{i=t-d+1}^{t-1} K_{h_t}\left({1-j/t}\right)\mathbb{E}\left\{\mathbb{I}(P(x_j)>\tau)\right\}}{ \sum_{i=t-d+1}^{t-1} K_{h_t}\left({1-j/t}\right)}. 
\eeq
Next note that
$$\mathbb{E}\left\{   \frac{\sum_{i=t-d+1}^{t-1} K_{h_t}\left(1-i/t\right)\mathbb{I}(P(x_i)>\tau) }{\sum_{i=t-d+1}^{t-1} K_{h_t}\left(1-i/t\right)}    \right\} = \frac{\sum_{i=t-d+1}^{t-1} K_{h_t}\left({1-j/t}\right)\mathbb{E}\left\{\mathbb{I}(P(x_j)>\tau)\right\}}{\sum_{i=t-d+1}^{t-1} K_{h_t}\left({1-j/t}\right)}, \; \mbox{and}$$
\begin{align*}
\Var\left\{   \frac{\sum_{i=t-d+1}^{t-1} K_{h_t}\left(1-i/t\right)\mathbb{I}(P(x_i)>\tau) }{\sum_{i=t-d+1}^{t-1} K_{h_t}\left(1-i/t\right)}    \right\} &=O\left[t^{-1}\dfrac{\int_{0}^{1}K^2_{h_t}(x)dx}{\left\{\int_{0}^{d/t}K_{h_t}(x)dx\right\}^2}            \right]\\
&=O\left[(th_{t})^{-1}/  \left\{\int_{0}^{d/t}K_{h_t}(x)dx\right\}^2 \right].
\end{align*}
By (A4), we have 
$\left\{\int_{0}^{1}K_{h_t}(x)dx\right\}^2\geq c$
for some constant $c>0$. Now
$th_{t}\rightarrow \infty$,
implies $\Var\left\{   \frac{\sum_{i=t-d+1}^{t-1} K_{h_t}\left(1-i/t\right)\mathbb{I}(P(x_i)>\tau) }{\sum_{i=t-d+1}^{t-1} K_{h_t}\left(1-i/t\right)}    \right\}\rightarrow 0$.\\
By Chebyshev's inequality, 
$$  \frac{\sum_{i=t-d+1}^{t-1} K_{h_t}\left(1-i/t\right)\mathbb{I}(P(x_i)>\tau) }{\sum_{i=t-d+1}^{t-1} K_{h_t}\left(1-i/t\right)} \xrightarrow{p}  \frac{\sum_{i=t-d+1}^{t-1} K_{h_t}\left({1-j/t}\right)\mathbb{E}\left\{\mathbb{I}(P(x_j)>\tau)\right\}}{ \sum_{i=t-d+1}^{t-1} K_{h_t}\left({1-j/t}\right)}.$$
Combining (\ref{eq4}) , (\ref{eq5}), (A1) and (A2),
$$\hat{\mathbb{P}}(P_t>\tau) =\frac{\sum_{i=t-d+1}^{t-1} K_{h_t}\left(1-i/t\right)\mathbb{I}(P(x_i)>\tau) }{\sum_{i=t-d+1}^{t-1} K_{h_t}\left(1-i/t\right)}\xrightarrow{p} \mathbb{P}(P_t>\tau).$$
Therefore (\ref{eq1}) follows.

We now show (\ref{eq2}). Let $\epsilon=\sqrt{h_x}$. Write
$$\mathbb{E}\int_{P(x)>\tau} K_{h_x}\left({x_j-x}\right)dx=\mathbb{E}\int_{P(x)>\tau,|x_j-x|<\epsilon} K_{h_x}\left({x_j-x}\right)dx+\mathbb{E}\int_{P(x)>\tau,|x_j-x|>\epsilon} K_{h_x}\left({x_j-x}\right)dx. $$
Use the normal tail bound,
\begin{align*}
\int_{P(x)>\tau,|x_j-x|>\epsilon} K_{h_x}\left({x_j-x}\right)dx&\leq \int_{|x_j-x|>\epsilon} K_{h_x}\left({x_j-x}\right)dx\\
&\leq 2\exp\{-1/(2h_x)\}\rightarrow 0.
\end{align*}
Define $\mathcal{A}_{\tau}=\{x_j:P(x_j)>\tau\}$, let $f_j$ be the density function for $X_j$. Note that
\begin{align*}
\mathbb{E}\int_{P(x)>\tau,|x_j-x|<\epsilon} K_{h_x}\left({x_j-x}\right)dx&=\int_{-\infty}^{\infty}\int_{P(x)>\tau,|x_j-x|<\epsilon} K_{h_x}\left({x_j-x}\right)f_j(x_j)dxdx_j\\
&=\int_{\mathcal{A}_\tau\pm \epsilon}\int_{|x_j-x|<\epsilon}K_{h_x}\left({x_j-x}\right)dxf_j(x_j)dx_j\\
&=\int_{\mathcal{A}_\tau\pm \epsilon}[1-O\{2\exp(-1/(2h_x))\}]f_j(x_j)dx_j\\
&\rightarrow \int_{\mathcal{A}_\tau}f_j(x_j)dx_j=\mathbb{E}\left\{\mathbb{I}(P(x_j)>\tau)\right\}.
\end{align*}
Hence (\ref{eq2}) is proved. The lemma follows.
\subsection{Proof of  lemma \ref{lem3}}

Note that $f_t(x)$ is continuous and positive on the real line, then there exists $K_1=[-M,M]$ such that $\mathbb{P}(x\in K_1^c)\rightarrow 0$ as $M\rightarrow \infty$.

Let $\inf_{x\in K_1}f_t(x)=l_0$ and $A_\epsilon^{f_t}=\{x:|\hat{f}_t(x)-f_t(x)|\geq l_0/2\}$. Note that 
$$\mathbb{E}\|\hat{f}_t(x)-f_t(x)\|^2\geq (l_0/2)^2\mathbb{P}(A_\epsilon^{f_t}),\text{ then }\mathbb{P}(A_\epsilon^{f_t})\rightarrow 0,$$
we claim that $f_t$ and $\hat{f}_t$ are bounded below by a positive number for large $t$ except for an event that has a low probability. Similar arguments can be applied to the upper bound of $\hat{f}_t$ and $f_t$, as well as the cases for $f_0$ and $\hat{f}_0$. Therefore, we conclude that $f_0,\;\hat{f}_0\;,f_t$, and $\hat{f}_t$ are all bounded in the interval $[l_a,l_b],\;0<l_a<l_b<\infty$ for large $t$ except for an event $A_\epsilon$ that has probability tends to 0. Hence
$
0<l_a<\inf_{z\in {A_\epsilon}}\min\{f_0,\hat{f}_0,f_t,\hat{f}_t\}<\sup_{z\in {A_{\epsilon}^c}}\max\{f_0,\hat{f}_0,f_t,\hat{f}_t\}<l_b<\infty.
$
Next note that 
$$
\widehat{\text{Clfdr}}_t^\tau-\text{Clfdr}_t=\frac{\hat{f}_0f_t(\pi^\tau_t-\hat{\pi}^\tau_t)+(1-\pi^\tau_t)f(\hat{f}_0-f_0)+(1-\pi^\tau_t)f_0(f_t-\hat{f}_t)}{\hat{f}_tf_t},
$$
we conclude that 
$$
\left(\widehat{\text{Clfdr}}_t^\tau-\text{Clfdr}^\tau_t\right)^2\leq c_1\left(\pi_t^\tau-\hat{\pi}^\tau_t\right)^2+c_2\left(\hat{f}_0-f_0\right)^2+c_3\left(\hat{f}_t-f_t\right)^2\text{ in }A_\epsilon^c.
$$
It is easy to see that $\|\widehat{\text{Clfdr}}_t^\tau-\text{Clfdr}_t^\tau\|^2$ is bounded by some constant $L$, then 
$$
\mathbb{E}\|\widehat{\text{Clfdr}}_t^\tau-\text{Clfdr}_t^\tau\|^2\leq L\mathbb{P}(A_\epsilon)+c_1\mathbb{E}\|\hat{\pi}^\tau_t-\pi^\tau_t\|^2+c_2\mathbb{E}\|\hat{f}_t-f_t\|^2+c_3\mathbb{E}\|\hat{f}_0-f_0\|^2.
$$
According to the assumptions, we further have that for a given $\epsilon>0$, there exists $M\in \mathbb{Z}^+$ such that we can find $A_\epsilon$, $\mathbb{P}(A_\epsilon)<\epsilon/(4L)$, and at the same time $\mathbb{E}\|\hat{\pi}^\tau_t-\pi_t^\tau\|^2<\epsilon/(4c_1)$, $\mathbb{E}\|\hat{f}_t-f_t\|^2<\epsilon/(4c_2)$, and $\mathbb{E}\|\hat{f}_0-f_0\|^2<\epsilon/(4c_3)$ for all $t\geq M$. Consequently, we have $\mathbb{E}\|\widehat{\text{Clfdr}}_t^\tau-\text{Clfdr}_t^\tau\|^2<\epsilon$ for $t\geq M$, and the desired result follows.

	\section{Optimality of the Clfdr rule in simultaneous testing}\label{opt-clfdr.sec}
	The optimality of the Clfdr rule in simultaneous testing is summarized in the following proposition. The idea in the proof essentially follows that in \cite{Caietal19}. We provide it here for completeness. 
	\begin{proposition}\label{opt.prop}
		Consider a class of decision rules $\pmb\delta(\gamma)=\{I(\mbox{CLfdr}_i<\gamma): 1\leq i\leq m\}$ for simultaneous testing of hypotheses $\{H_i: i\in \mathcal N_d(t)\}$ in the neighborhood of $t$. Denote $Q_{OR}(\gamma)$ the marginal FDR of $\pmb\delta(\gamma)$. If $\alpha<Q_{OR}(1)$, then the oracle threshold $\gamma_{OR}\coloneqq\sup\{\gamma: Q_{OR}(\gamma)\leq \alpha\}$ exists and is unique. Define the oracle rule  $\pmb\delta_{OR}=\left\{I(\mbox{CLfdr}_i\leq \gamma_{OR}): 1\leq i\leq m\right\}$. Then $\pmb\delta_{OR}$ is optimal for simultaneous testing in the sense that 
		$$
		\text{mFDR}\left(\pmb\delta_{OR}\right)\leq\alpha,\;\text{ETP}\left(\pmb{d}_*\right)\leq \text{ETP}\left(\pmb\delta_{OR}\right)\text{ for all }\pmb{d}_* \text{ such that } \mbox{mFDR}(\pmb d_*)\leq\alpha.
		$$
\end{proposition}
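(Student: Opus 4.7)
\textbf{Proof proposal for Proposition \ref{opt.prop}.} The plan is to rewrite both $\mbox{mFDR}$ and $\mbox{ETP}$ as linear functionals of the decisions using the identity $\mathbb{E}[1-\theta_i\mid X_i] = \mbox{CLfdr}_i$, then to establish existence and uniqueness of $\gamma_{OR}$ from a monotonicity-continuity analysis of $Q_{OR}$, and finally to obtain optimality via a Lagrangian/pointwise-exchange argument. The key reformulations are
\[
\mbox{mFDR}(\pmb d) = \frac{\mathbb{E}\big[\sum_i \mbox{CLfdr}_i \, d_i\big]}{\mathbb{E}\big[\sum_i d_i\big]}, \qquad \mbox{ETP}(\pmb d) = \mathbb{E}\Big[\sum_i (1 - \mbox{CLfdr}_i)\, d_i\Big].
\]
Specializing to $\pmb\delta(\gamma)$, one finds that $Q_{OR}(\gamma)$ equals the conditional mean of $\mbox{CLfdr}$ on $\{\mbox{CLfdr}<\gamma\}$: this is manifestly non-decreasing in $\gamma$, satisfies $Q_{OR}(\gamma)\leq\gamma$ (so in particular $Q_{OR}(\alpha)\leq\alpha$), and is continuous in $\gamma$ under a mild no-atom condition on the marginal distribution of $\mbox{CLfdr}_i$. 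Combined with the hypothesis $\alpha < Q_{OR}(1)$, I conclude that the set $\{\gamma:Q_{OR}(\gamma)\leq\alpha\}$ is a non-empty interval whose supremum $\gamma_{OR}$ is attained and satisfies $Q_{OR}(\gamma_{OR})=\alpha$; hence $\gamma_{OR}\geq\alpha$ and $\mbox{mFDR}(\pmb\delta_{OR})=\alpha$, which settles the validity claim.

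\medskip

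For the optimality claim, I would run a Lagrangian argument. Choose $\lambda=(1-\gamma_{OR})/(\gamma_{OR}-\alpha)\geq 0$, which is arranged so that $(1+\lambda\alpha)/(1+\lambda)=\gamma_{OR}$ (the boundary case $\gamma_{OR}=\alpha$ is handled through the limit $\lambda\to\infty$, which collapses the objective below to $\mathbb{E}[\sum_i(\alpha-\mbox{CLfdr}_i)d_i]$), and define
\[
H(\pmb d) \;=\; \mathbb{E}\Big[\sum_i \left\{(1+\lambda\alpha) - (1+\lambda)\,\mbox{CLfdr}_i\right\} d_i\Big].
\]
The coefficient multiplying $d_i$ is pointwise positive on $\{\mbox{CLfdr}_i<\gamma_{OR}\}$ and non-positive outside, so $H$ is maximized over $\pmb d\in\{0,1\}^m$ by $\pmb\delta_{OR}$, giving $H(\pmb\delta_{OR})\geq H(\pmb d_{\ast})$. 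The essential rearrangement is
\[
H(\pmb d) \;=\; \mbox{ETP}(\pmb d) + \lambda\left\{\alpha\,\mathbb{E}\big[\sum_i d_i\big] - \mathbb{E}\big[\sum_i \mbox{CLfdr}_i\, d_i\big]\right\},
\]
whose bracketed term vanishes for $\pmb\delta_{OR}$ by the active constraint $\mbox{mFDR}(\pmb\delta_{OR})=\alpha$ and is non-negative for any feasible competitor $\pmb d_{\ast}$ by slackness. Chaining gives $\mbox{ETP}(\pmb\delta_{OR})=H(\pmb\delta_{OR})\geq H(\pmb d_{\ast})\geq \mbox{ETP}(\pmb d_{\ast})$, which is the required inequality.

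\medskip

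The main obstacle I anticipate is the careful handling of boundary behavior. Continuity of $Q_{OR}$ and attainment of the level $\alpha$ demand either a no-atom assumption on the marginal distribution of $\mbox{CLfdr}_i$ or an explicit tie-breaking/randomization at the threshold, and the degenerate case $\gamma_{OR}=\alpha$ must then be verified through the limiting Lagrangian noted above. These are standard technical issues that closely parallel the oracle optimality theory in \cite{Caietal19}, and beyond them the proof is essentially a short algebraic exercise.
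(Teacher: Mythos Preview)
Your proposal is correct and follows essentially the same route as the paper. Both proofs (i) rewrite $\mbox{mFDR}$ and $\mbox{ETP}$ via the tower identity $\mathbb{E}[(1-\theta_i)\delta_i]=\mathbb{E}[\mbox{CLfdr}_i\,\delta_i]$, (ii) establish $Q_{OR}(\gamma)\leq\gamma$ and monotonicity of $Q_{OR}$ to pin down $\gamma_{OR}$ with $Q_{OR}(\gamma_{OR})=\alpha$, and (iii) prove optimality by a Neyman--Pearson/Lagrangian comparison; your explicit multiplier $\lambda=(1-\gamma_{OR})/(\gamma_{OR}-\alpha)$ is precisely the reciprocal of the paper's implicit multiplier arising from the transformation $f(x)=(x-\alpha)/(1-x)$, so the two arguments are algebraically equivalent. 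Your remarks on the no-atom/tie-breaking issue are, if anything, slightly more careful than the paper's ``squeeze theorem'' line.
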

\begin{proof}

	The proof has two parts. In (a), we establish two properties of the testing rule that thresholds the Clfdr at an arbitrary $\gamma$, $\{\mathbb I (\mbox{Clfdr}_i < \gamma): 1 \leq i \leq m\}$. We show that it produces mFDR $< \gamma$ for all $\gamma$ and that its mFDR is monotonic in $t$. In (b) we show that when the threshold is $\gamma_{OR}$, the testing rule, $\bm{\delta}_{OR}$, exactly attains the mFDR level and is optimal amongst all valid testing procedures controls mFDR at level $\alpha$.
	
	\medskip
	
	\noindent\textbf{Part(a).}  For the testing rule $\{\mathbb I (\mbox{Clfdr}_i < \gamma): 1 \leq i \leq m\}$, let $Q_{OR}(\gamma) = \alpha_\gamma$. We first show that $\alpha_\gamma < \gamma$. Since $\mbox{Clfdr}_i = P(\theta_i = 0|X_i = x_i)$, then
	
	\beq\label{eq:EthetaTor}
	\mathbb{E}\left\{\sum_{i}^{m}(1-\theta_i)\delta_i\right\}= \mathbb{E}_{\bm{X }} \left[ \left\{\sum_{i}^{m} \mathbb{E}_{\bm{\theta|X }}(1-\theta_i) \delta_i \right\} \right] = \mathbb{E}_{\bm{X}}\left(\sum_{i}^{m}\mbox{Clfdr}_i\delta_i\right)
	\eeq
	
	\noindent where notation $E$ is the expected value taken over $(\bm{X, \theta})$,  notation $E_{\bm{X}}$ is the expectation taken over the  distribution of $(\bm{X})$, and  $\mathbb{E}_{\bm{\theta|X}}$ is the expectation taken over $\bm{\theta}$, holding $(\bm{X})$ fixed. We use \eqref{eq:EthetaTor} in the definition of $Q_{OR}(\gamma)$ to get
	
	\beq\label{eq:rewritemFdr}
	\mathbb{E}_{\bm{X}}\left\{\sum_{i=1}^{m}(\mbox{Clfdr}_i - \alpha_{\gamma})\mathbb{I}(\mbox{Clfdr}_i\leq \gamma)\right\} = 0.
	\eeq
	
	\noindent The equality above implies that $\alpha_\gamma < \gamma$. To see this, consider that all potentially non--zero terms arise when $\mbox{Clfdr}_i \leq \gamma$, and when this is the case, either (i) $\alpha \leq  \mbox{Clfdr}_i < \gamma$, (ii)  $\mbox{Clfdr}_i \leq \alpha < \gamma$,  or (iii) $\mbox{Clfdr}_i< \gamma \leq \alpha$. Notice (i) produces zero or positive terms on the LHS of \eqref{eq:rewritemFdr}, (ii) produces zero or negative terms, and (iii) produces negative terms. If $\alpha_\gamma \geq \gamma$, then only (iii) is possible, which contradicts the RHS. Thus,  the testing rule is valid.
	
	Next, we  show that $Q_{OR}(\gamma)$ is nondecreasing in  $\gamma$. That is, letting $Q(\gamma_j) = \alpha_j$, if $\gamma_1 < \gamma_2$, then $\alpha_{\gamma_1} \leq \alpha_{\gamma_2}$. We argue by contradiction.  Suppose that $\gamma_1 < \gamma_2$ but $\alpha_1 > \alpha_2$. First, it cannot be that $\mathbb I(\mbox{Clfdr}_i< \gamma_2) = 0$ for all $i$, because that implies $\alpha_1 = \alpha_2 $ (both equal $0$).  Next, since $\gamma_1 < \gamma_2$, 
	\begin{align*}
	(\mbox{Clfdr}_i - \alpha_2) \mathbb I(\mbox{Clfdr}_i < \gamma_2) = (\mbox{Clfdr}_i - \alpha_2)  \mathbb I(\mbox{Clfdr}_i < \gamma_1) + (\mbox{Clfdr}_i - \alpha_2)  \mathbb I(\gamma_1 \leq \mbox{Clfdr}_i < \gamma_2) 
	\end{align*}
	and rewrite $(\mbox{Clfdr}_i - \alpha_2) \mathbb I(\mbox{Clfdr}_i < \gamma_1)  =  (\mbox{Clfdr}_i- \alpha_1) \mathbb I(\mbox{Clfdr}_i < \gamma_1)   +  (\alpha_1 - \alpha_2)  \mathbb I(\mbox{Clfdr}_i < \gamma_1). $
	If $\alpha_2 < \alpha_1$, then 
	\begin{align} \label{eq:a.monot.1}
	(\mbox{Clfdr}_i - \alpha_2) \mathbb I(\mbox{Clfdr}_i <  \gamma_2) \geq &  (\mbox{Clfdr}_i - \alpha_1)  \mathbb I(\mbox{Clfdr}_i< \gamma_1) + (\alpha_1 - \alpha_2)  \mathbb I(\mbox{Clfdr}_i< \gamma_1) \\
	& +  (\mbox{Clfdr}_i - \alpha_1)  \mathbb I(\gamma_1 \leq \mbox{Clfdr}_i < \gamma_2). \nonumber
	\end{align}
	It follows that
	\[
	\mathbb{E} \left\{\sum_{i=1}^{m}(\mbox{Clfdr}_i - \alpha_2) \mathbb I(\mbox{Clfdr}_i < \gamma_2) \right\} > 0.
	\]
	To see this, consider the expectation of the sum over $m$ tests for the three RHS terms of \eqref{eq:a.monot.1}, which we reference as (i), (ii), and (iii) respectively. First, (i) is zero because of (\ref{eq:rewritemFdr}). Then for each $\mbox{Clfdr}_i < \gamma_2$, either (ii) is positive  because $\alpha_2 < \alpha_1$, or (iii) is positive because $\alpha_1 < \gamma_1$.
	
	However, \eqref{eq:rewritemFdr} establishes that $\mathbb{E}\left\{\sum_{i=1}^{m}(\mbox{Clfdr}_i - \alpha_2) \mathbb I(\mbox{Clfdr}_i< \gamma_2) \right\} = 0$, leading to a contradiction. Hence, $\alpha_1 < \alpha_2.$
	
	\medskip
	
	\noindent\textbf{Part(b).} The oracle threshold is defined as $\gamma_{OR} = \sup_{\gamma}\{\gamma \in (0,1): Q_{OR}(\gamma) \leq \alpha\}$. First, let $\bar{\alpha} = Q_{OR}(1)$, which represents the largest mFDR level that the oracle testing procedure can be. By part (a), $Q_{OR}(\gamma_{OR})$ is non--decreasing. Via the squeeze theorem,  for all $\alpha < \bar{\alpha}$, this implies that $Q_{OR}(\gamma_{OR}) = \alpha$.
	
	Next, consider the power of $\bm{\delta}_{OR} = \left\{\mathbb I(\mbox{Clfdr}_i < \gamma_{OR}): 1 \leq i \leq m\right\}$ compared to that of an arbitrary decision rule $\bm{d}_{*} = (d_{*}^{1}, \hdots, d_{*}^{m})$ such that $mFDR(\bm{d}_{*}) \leq  \alpha$. Using the previous result from part(a), it follows that
	\[
	\mathbb{E}\left\{\sum_{i=1}^{m}(\mbox{Clfdr}_i - \alpha) \delta_{OR}^{i} \right\} = 0 \qquad \text{and} \qquad \mathbb{E} \left\{\sum_{i=1}^{m}(\mbox{Clfdr}_i - \alpha) d_{*}^{i} \right\} \leq 0.
	\]
	Take the difference of the two expressions to obtain
	\beq\label{eq:or.power1}
	\mathbb{E} \left\{\sum_{i=1}^{m}(\delta_{OR}^{i} - d_{*}^{i})(\mbox{Clfdr}_i - \alpha)  \right\} \geq 0.
	\eeq
	Next apply a transformation $f(x) = (x-\alpha)/(1-x) $ to each $\delta_{OR}^{i}$. Note that because $f'(x) = (1-\alpha)/(1-x)^2 >0$, $f(x)$ is monotonically increasing. Then order is preserved: if  $\mbox{Clfdr}_i < \gamma_{OR}$ then $f(\mbox{Clfdr}_i) < f(\gamma_{OR})$ and likewise for $\mbox{Clfdr}_i^{i} > \gamma_{OR}$. This means we can rewrite  $\delta_{OR}^{i} = \mathbb I\left[\left\{ (\mbox{Clfdr}_i - \alpha)/(1-\mbox{Clfdr}_i)\right\} < \gamma_{OR}\right]$, where $\gamma_{OR} =(\gamma_{OR} - \alpha)/(1-\gamma_{OR})$. It will be useful to note that, from part (a), we have $\alpha < \lambda_{OR} < 1$, which implies that $\gamma_{OR} > 0$.
	
	Then, 
	\beq\label{eq:or.power2}
	\mathbb{E}\left[\sum_{i=1}^{m}(\delta_{OR}^{i}-d_{*}^{i})\left\{(\mbox{Clfdr}_i-\alpha) - \gamma_{OR}(1-\mbox{Clfdr}_i)\right\}\right] \leq 0.
	\eeq
	To see this,  consider that if  $\delta_{OR}^{i}-d_{*}^{i} \neq 0$, then either  (i) $\delta_{OR}^{i} > d_{*}^{i}$ or (ii) $\delta_{OR}^{i} < d_{*}^{i}$. If (i), then $\delta_{OR}^{i} = 1$ and it follows that $\left\{(\mbox{Clfdr}_i - \alpha)/(1-\mbox{Clfdr}_i)\right\} < \gamma_{OR}$. If (ii), then $\delta_{OR}^{i} = 0$ and  $\left\{(\mbox{Clfdr}_i - \alpha)/(1-\mbox{Clfdr}_i)\right\} \geq \gamma_{OR}$. For both cases, 
	$$(\delta_{OR}^{i}-d_{*}^{i})\left\{(\mbox{Clfdr}_i-\alpha) - \gamma_{OR}(1-\mbox{Clfdr}_i)\right\} \leq 0.$$ 
	Summing over all $m$ terms and taking the expectation yields \eqref{eq:or.power2}.
	
	Combine \eqref{eq:or.power1} and \eqref{eq:or.power2} to obtain
	\[
	0 \leq \mathbb{E} \left\{\sum_{i=1}^{m}(\delta_{OR}^{i} - d_{*}^{i})(\mbox{Clfdr}_i^{i} - \alpha)  \right\} 
	\leq \gamma_{OR} \mathbb{E} \left\{\sum_{i=1}^{m}(\delta_{OR}^{i} -d_{*}^{i})(\mbox{Clfdr}_i - \alpha)  \right\}. 
	\]
	Finally, since $\gamma_{OR} > 0$, it follows that $\mathbb{E} \left\{\sum_{i=1}^{m}(\delta_{OR}^{i} - d_{*}^{i})(\mbox{Clfdr}_i - \alpha)  \right\} > 0$. After distributing the $(\delta_{OR}^{i} - d_{*}^{i})$ term and separating the expectations for the sums of the two  decision rules, we apply the definition of $ETP(\bm{\delta}) =  \mathbb{E} \left\{\sum_{i=1}^{m}\delta^{i}\left(\mbox{Clfdr}_i - \alpha\right)  \right\}$ to conclude that $ETP(\bm{\delta}_{OR}) \geq ETP(\bm{d}_{*})$.
\end{proof}


\end{document}